\newcommand{\setalglineno}[1]{%
  \setcounter{ALC@line}{\numexpr#1-1}}
\newtheorem{theorem}{Theorem}[section]
\newtheorem{proposition}[theorem]{Proposition}
\newtheorem{definition}[theorem]{Definition}
\theoremstyle{definition}
\theoremstyle{remark}
\newcommand{\CC}{\mathbb{C}}
\newcommand{\RR}{\mathbb{R}}
\newcommand{\PP}{\mathbb{P}}
\newcommand{\mb}[1][]{\mathbf}
\newcommand{\m}[1]{\mb{m_{#1}}}
\newcommand{\x}{\mb{x}}
\newcommand{\bs}[1][]{\boldsymbol}
\newcommand{\bt}{\boldsymbol{\tau}}
\newcommand{\eqn}{\begin{eqnarray}}
\newcommand{\feqn}{\end{eqnarray}}
\newcommand{\PM}[1]{\langle #1 \rangle}
\newcommand{\unaryminus}{{\tiny\scalebox{0.5}[1.0]{\( - \)}}}
\begin{document}

\title[Statistical Model of Source Localization based on Range Differences]{On the Statistical Model of Source Localization based on Range Difference Measurements}

\author[M. Compagnoni, R. Notari, F. Antonacci, A. Sarti]{Marco Compagnoni, Roberto Notari, Fabio Antonacci, Augusto Sarti}

\begin{abstract}
In this work we study the statistical model of source localization based on Range Difference measurements. We investigate the case of planar localization of a source using a minimal configuration of three non aligned receivers.
Our analysis is based on a previous work of the same authors concerning the localization in a noiseless scenario. As the set of feasible measurements is a semialgebraic variety, this investigation makes use of techniques from Algebraic Statistics and Information Geometry.\\[2mm]
\noindent \textbf{Keywords.} Source localization, range differences, statistical modeling and parameters estimation.
\\[2mm]
\noindent \textbf{AMS Subject Classification.} 13P25,62Fxx,62P30,94A12.
\end{abstract}

\address{Dipartimento di Matematica, Politecnico di Milano, I-20133 Milano, Italy}
\email{marco.compagnoni@polimi.it}
\email{roberto.notari@polimi.it}
\address{Dipartimento di Elettronica, Informazione e Bioingegneria, Politecnico di Milano, I-20133 Milano, Italy}
\email{fabio.antonacci@polimi.it}
\email{augusto.sarti@polimi.it}

\thanks{The authors are grateful to Alessandra Guglielmi for the useful discussions and suggestions during the preparation of this work.}

\date{\today}

\maketitle


\section{Introduction}\label{sc:intro}
Source localization from the analysis of the signals captured by multiple sensors is a classical research theme in science and engineering. Among the early studies on this subject (dating back to World War II) is the analysis of the two-dimensional LOng RAnge Navigation (LORAN) radio positioning system. LORAN was based on the measurements of time differences of arrival (TDOAs) of synchronized radio signals originated from three distinct known emitters. The method needed hyperbolic charts for determining the position of the receiver \cite{Getting1993}. Since then, there has been a proliferation of areas of applications where source localization plays a fundamental role. Among them are radar and sonar technologies; wireless sensor networks, the Global Positioning System (GPS); and robotics.
LORAN, in particular, is an example of localization technology based on Range Differences (RD), or pseudoranges. This technique is characterized by:
\begin{itemize}
\item
a point $\x$, whose location we want to find;
\item
a set $\{\m{0},\dots,\m{n}\}$ of points placed at known positions;
\item
the RDs of the signals emitted by $\{\m{i},\m{j}\}$ measured at $\x$ as experimental data.
\end{itemize}
RD--based localization is particularly popular in audio signal processing, where pseudoranges are usually computed from the measurements of the TDOAs between calibrated and synchronized microphones \cite{Benesty2004,Chen2002,Hu2010,Knapp1976,Ianniello1982}. As RDs and TDOAs are simply proportional to each other (given the sound propagation speed), we will treat (with a slight abuse of notation) RD and TDOA as synonymous throughout this manuscript. In other context, e.g. remote sensing, radar and GPS \cite{Siouris1993,Yimin2008}, RDs can again be derived from TDOAs \cite{yang2009efficient} or through other approaches such as energy measurements \cite{li2003energy}.

In the signal processing literature, we can find various examples of analysis of localization models based on numerical simulations. For example, a study of the TDOA--based localization for a minimal configuration of sensors (three receivers coplanar with the source) can be found in \cite{Spencer2007,Spencer2010}. Therein, the author makes use of the concept of TDOA space and offers a first description of the feasible set of TDOAs. Given the importance of the topic, in \cite{Bestagini2013,Compagnoni2013b,Compagnoni2013a} we offered a systematic and comprehensive analytic investigation of the mathematical models behind TDOA--based source localization. Using algebraic and geometric tools, we studied in details the deterministic model for the minimal TDOA--based localization:
\begin{itemize}
\item
we defined the TDOA map from the physical plane of source location to the space of TDOA measurements, which completely encodes the noiseless localization model;
\item
we described the image of the TDOA map, i.e. the set of feasible noiseless measurements;
\item
we studied the invertibility of the map and, consequently, the existence and uniqueness of the source for any given set of measurements.
\end{itemize}
As a confirmation of the importance of these topics for applications we can cite \cite{Alameda2014}, which describes the use of TDOA measurements set for TDOA estimation. Similar works have been carried out for different kinds of measurements as well. In particular, in \cite{Compagnoni2016b} the authors describe Range--based localization models, while Directions Of Arrival measurements are considered in \cite{schicho2015ambiguities}.
It is well known, however, that in real world scenarios localization techniques are sensitive to measurement noise. The sources of disturbance that tend to affect measurements in audio signal processing can be broadly classified into additive noise (due to time sampling, circuit noise and other physical phenomena) and outlier measurements (produced by reverberation or interfering sources). In order to deal with these problems, it is necessary to go further in the study of the models, and move from deterministic to statistical modeling. A first step towards this goal was taken in \cite{Compagnoni2016a}, where a denoising removal algorithm was proposed, based of the analysis presented in \cite{Compagnoni2013a}.
In \cite{Compagnoni2016a}, however, the authors could not exploit the full potential of the description of \cite{Compagnoni2013a} as the inherent complexity of the model called for a detailed study that would explicitly be devoted to addressing the problems of multiple localization and parameter estimation from a statistical standpoint. This is, in fact, the goal of this manuscript.

We will leverage on the results contained in \cite{Compagnoni2013a} to achieve the following goals:
\begin{enumerate}
\item
to study the statistical model behind TDOA-based localization for the minimal case of three receivers and one coplanar source. We will give particular care to the problem of ambiguity in localization;
\item
to provide an effective Maximum Likelihood localization technique that, given the range differences and the location of the sensors, computes the source location;
\item
to develop a technique that, given the sensor locations and an estimate of the measurement error magnitude, predicts the localization error covariance, as well as its bias.
\end{enumerate}

In order to attain the first goal, we leverage on the use of Information Geometry \cite{Amari2000}, which turns out to be particularly suitable for our geometric approach to the localization problem (see also \cite{Cheng2013} for its use in the context of Range--based localization). Moreover, Information Geometry allows us to apply the asymptotic theory of estimation for studying the accuracy of source localization, which is our third goal in the list. In the literature, the asymptotic estimation of the Root Mean Square Error (RMSE) and the bias are also among the goals of \cite{Ho2012,So2013}, although pursued with different tools. In this work we push the boundary a bit further: will focus on predicting the accuracy of the asymptotic estimation through the analysis of higher order statistics.

As far as the second goal is concerned, the fact that the Maximum Likelihood Estimation (MLE) is optimal from a statistical point of view is well known, as it attains the Cramer-Rao Lower Bound.
In the literature MLE algorithms are based on the maximization of the likelihood function, which depends on the coordinates of the source. Unfortunately, the nonlinearity and the non-convexity of the likelihood function make it quite difficult to formulate an effective solution, which is why other sub-optimal techniques are mostly used \cite{Hahn1973,Stoica1988,Schau1987,Huang2001,Beck2008,Schmidt1972,Compagnoni2012,Canclini2015}.
In our approach, the estimation is performed in the parameter space of the model. We exploit the knowledge on the geometry of the set of feasible measurements for obtaining a (quasi) closed-form solution of MLE. In our framework, MLE is equivalent to finding the solution of the geometric problem of projecting a point onto the set of feasible measurements, according to a suitable Euclidean structure defined on the measurements space. As proven in \cite{Compagnoni2013a}, the set of feasible measurements is a semi-algebraic variety, therefore our analysis naturally falls within the domain of Algebraic Statistics \cite{Draisma2013}.

Our rigorous analysis of localization in the minimal sensors configuration is particularly interesting for applications where one has some constraint on the amount of measurements. E.g. in GPS localization, where the number of available satellites is bounded and it is necessary to consider minimal information scenarios \cite{Abel1991,Awange2002,Bancroft1985,Chauffe1994,Coll2012,Coll2009,Grafarend2002,
Hoshen1996,Leva1995,Schmidt1972}. However, our contribution can be useful also in other fields, such as in audio signal processing and wireless sensor networks. Although in these contexts the number of sensors to handle is usually larger than the minimum, there are applications where it is convenient to focus on smaller subsets of them. For example, this is the case of robust estimation techniques like RANdom SAmple Consensus (RANSAC) algorithm \cite{Fischler1981}. Indeed, by considering few measurements at a time and combining the corresponding estimations, one can develop tests on the single measurements and identify the inliers and outliers in the dataset. We finally remark that the techniques that we develop in this manuscript are the basis also for the study of more general situations, with a greater number of sensors.

The paper is organized as follows. In Section \ref{sc:TDOAspace}, we recount the main results introduced in \cite{Compagnoni2013a} on the deterministic model for TDOA--based source localization in a minimal sensing scenario.
In Section \ref{sc:StatMod} we focus on accurately defining the statistical model. This is a rather delicate task, due to the difficulties that arise from localization ambiguities. In our approach, we choose to consider the model as a composition of four distinct curved exponential families, one for each region where the restriction of the TDOA map is a diffeomorphism between the physical and the measurements spaces.
In Section \ref{sc:sourceest} we address the MLE in the measurements space. As mentioned above, this is equivalent to studying the orthogonal projection of a point onto the set of feasible TDOA measurements.
Section \ref{sc:ASI} is devoted to studying the accuracy of source localization via MLE. Our analysis is based on asymptotic statistical inference through the approach of Information Geometry. In particular, we obtain an analytic form for the mean square error and the bias of the MLE. Moreover, in Subsection \ref{sc:assASI} we propose a method for evaluating the reliability of the asymptotical inference, based on higher-order statistics.
Section \ref{sec:validation} looks at the problem from a practical standpoint. In Subsection \ref{sec:MLEfinale} we explicitly describe the MLE algorithm for each one of the four models defined in Section \ref{sc:StatMod}. In Subsection \ref{sc:simulations} we conduct a simulation campaign, aimed at validating our algorithms and conducting an asymptotic error analysis. In Subsection \ref{sec:blindMLE} we then give indications on the source localization problem in a real scenario, in which we don't not know in advance which model to use.
In Section \ref{sec:conclusion} we briefly discuss the potential impact of this work and draw some conclusions. Finally, in Appendix \ref{app:evoluta} we include the code for computing the Cartesian equation of the Mahalanobis degree discriminant of an ellipse, which has a role in the computation of the MLE.


\section{The TDOA space and the deterministic model}\label{sc:TDOAspace}

The TDOA space and the TDOA maps were introduced in \cite{Spencer2007,Compagnoni2013a} for the analysis of TDOA--based source localization with a minimal configuration of three receivers in two dimensions. In this section, we briefly go over the main results of
\cite{Compagnoni2013a,Compagnoni2013b}, using the same tools and notations. In order to simplify matters, we only describe the case in which the receivers are not collinear. The interested reader can develop a similar statistical analysis for the case of aligned sensors starting from \cite{Compagnoni2013a}.

One of the main mathematical tools used in \cite{Compagnoni2013a} is the exterior algebra formalism over the three dimensional Minkowski vector space $\RR^{2,1},$ which roughly corresponds to the product of the Euclidean physical plane times the real line containing the TDOAs. Actually, this instrument is very useful for handling the equations involved in the localization problem. We refer to Appendix A of \cite{Compagnoni2013a} for an introduction to the subject. However, in this manuscript it is sufficient to use the exterior algebra formalism over the Euclidean vector space $\RR^2.$ For the convenience of the reader, here we summarize the main facts for this particular case. 

Let $ V $ be a $ 2$--dimensional Euclidean vector space and let $ B = \{\mb{e_1}, \mb{e_2}\}$ be an orthonormal basis. With a slight abuse of notation, we identify a vector $\mb{v}=v_1\mb{e_1}+v_2\mb{e_2}$ with its coordinates $(v_1,v_2)^T$. We have non trivial vector spaces $ \wedge^k V $ only for $ k=0,1,2:$
\begin{itemize}
\item
$ \wedge^0 V $ is the space of scalars, it has dimension $1$ and $\{1\}$ is an orthonormal basis;
\item
$ \wedge^1 V = V$ is the space of vectors, it has dimension $2$ and $B$ is an orthonormal basis;
\item
$ \wedge^2 V $ is the space of the $2$--forms, it has dimension $ 1 $ and $\{ \bs{\omega} = \mb{e_1} \wedge \mb{e_2}\}$ is an orthonormal basis.
\end{itemize}
The three spaces $\wedge^0 V,\wedge^1 V,\wedge^2 V$ form the exterior algebra $\wedge V$ over $V.$ The symbol $ \wedge $ stays for the exterior product, which is skew--commutative and linear with respect to each factor. We can be very explicit by working in coordinates with respect to the above natural basis. Let $\mb{v} = (v_1,v_2)^T,\ \mb{w} =(w_1,w_2)^T$ be vectors. Then
$$
\mb{v} \wedge \mb{w} = (v_1w_2-v_2w_1)\,\bs{\omega}=
\det
\left( \begin{array}{cc}
v_1 & w_1 \\
v_2 & w_2
\end{array} \right)
\bs{\omega}\in\wedge^2V.
$$

The Hodge operator $\ast$ defines an isomorphism between each pair of vector spaces $ \wedge^k V $ and $ \wedge^{2-k} V,\ k=0,1,2.$ Also in this case, we can give an explicit definition of $\ast$ by describing its action on the natural basis:
$$
\ast\,1 = \boldsymbol{\omega},\qquad
\ast\,\mb{e_1} = \mb{e_2},\ \ast\,\mb{e_2}=-\mb{e_1},\qquad
\ast\,\boldsymbol{\omega}= 1.
$$
The linearity of $\ast,$ allows us to write $\ast\,\mb{v}=(-v_2,v_1)^T$, which means that the Hodge operator acting on $\wedge^1 V$ corresponds to a counterclockwise rotation of $\frac{\pi}{2}$, represented by the matrix
$$
\mathbf{H} = 
\left( \begin{array}{cc}
0 & -1\\
1 & 0
\end{array} \right)
$$
with respect to $B$, therefore we have $\ast\mb{v}=\mb{H\,v}$.
Finally, we have
$$
\ast(\mb{v} \wedge \mb{w}) = 
\ast((v_1w_2-v_2w_1)\,\bs{\omega})=
\det
\left( \begin{array}{cc}
v_1 & w_1 \\
v_2 & w_2
\end{array} \right).
$$

\subsection{The complete TDOA map}
We identify the physical world with the Euclidean plane and, after choosing an orthogonal Cartesian coordinate system, with $\RR^2$. We use $B$ as the orthonormal basis. On this plane, we have three receivers $\m{i}=(x_{i},y_{i})^T,\ i=0,1,2$ at known positions and a source $\x = (x,y)^T$. The corresponding displacement vectors are
\begin{equation}
\mb{d_i}(\x)=\x-\m{i},\qquad
\mb{d_{ji}}=\m{j}-\m{i},\qquad i,j=0,1,2,
\end{equation}
whose norms are $d_i(\x)$ and $ d_{ji}$, respectively.
Generally speaking, given a vector $\mb{v}$, we denote its Euclidean norm $||\mb{v}||$ with $v$ and with $\tilde{\mb{v}}= \frac{\mb{v}}{v}$ the corresponding unit vector. Furthermore, we name the angles $\alpha=\widehat{\mb{d_{10}}\mb{d_{20}}},\,\beta=\widehat{\mb{d_{01}}\mb{d_{21}}}$ and $\gamma=\widehat{\mb{d_{02}}\mb{d_{12}}}.$

In Figure \ref{rette-acustica} we draw a configuration of the receivers. We set $r_0,r_1,r_2$ the lines containing the sensors, according to the convention that the receiver $\m{i}$ does not lie on $r_i$. Up to relabeling the sensors, we can assume that $\ast(\mb{d_{10}}\wedge\mb{d_{20}})>0,$ i.e. $\mb{d_{10}},\mb{d_{20}}$ are counterclockwise oriented.
\begin{figure}[htb]
\begin{center}
\resizebox{5.7cm}{!}{
  \includegraphics[bb=206 316 405 475]{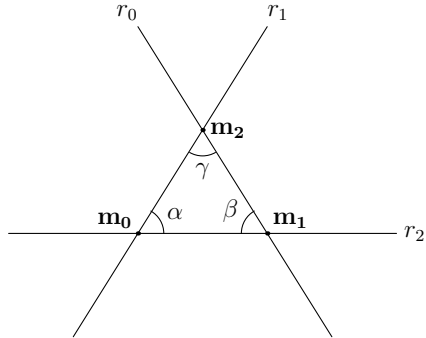}}
  \caption{\label{rette-acustica}Receivers $\m{0},\m{1},\m{2}$ in a generic planar configuration.}
\end{center}
\end{figure}

With no loss of generality, let us set the speed of propagation of the signal in the medium to $1$.
Therefore, in the noiseless scenario, the TDOA between each pair of different sensors is equal to the difference of the ranges:
\begin{equation}\label{TDOA}
\tau_{ji}(\x)=d_{j}(\x)-d_{i}(\x),\quad
i,j=0,1,2.
\end{equation}
We collect the three range differences in the complete TDOA map:
\begin{equation}\label{eq:TDOAmap}\
\begin{array}{cccc}
\bs{\tau_2^*}: & \RR^2 & \longrightarrow & \RR^3\\
& \x & \longmapsto & (\tau_{10}(\x),\tau_{20}(\x),\tau_{21}(\x))^T
\end{array}\ .
\end{equation}

The resulting target set $\RR^3$ of $\bs{\tau_2^*}$ is referred to as the TDOA space or $\tau$--space.
The map $\bs{\tau_2^*}$ completely defines the deterministic model behind the TDOA based source localization. In particular, its image $\text{Im}(\bs{\tau_2^*})$ is the set of feasible TDOAs in the $\tau$--space. This means that three noiseless TDOAs define a point $\bs{\tau^*}= (\tau_{10},\tau_{20},\tau_{21})^T\in\text{Im}(\bs{\tau_2^*})$ and, wherever the map $\bs{\tau_2^*}$ is invertible, the source position is ${\bs{\tau_2^*}}^{-1}(\bs{\tau^*}).$

\subsection{The reduced TDOA map}
The three range differences \eqref{TDOA} are not independent. In fact, the linear relation $ \tau_{21}(\x ) = \tau_{20}(\x ) - \tau_{10}(\x ) $ holds for each $ \x \in \RR^2.$ This means that three noiseless TDOAs are constrained on the plane
\begin{equation}
\mathcal{H}=\{\bs{\tau^*}\in\RR^3\ |\ \tau_{10}-\tau_{20}+\tau_{21}=0\}.
\end{equation}
Therefore, we are allowed to choose $\m{0}$ as a reference sensor and, without loss of information, to consider only the two TDOAs $\tau_{10}(\x),\tau_{20}(\x).$ We define the (reduced) TDOA map:
\begin{equation}
\begin{array}{cccc}
\bs{\tau_2}: & \RR^2          & \longrightarrow & \RR^2\\
 & \x   & \longrightarrow & \quad
(\tau_{10}(\x),\tau_{20}(\x))^T
\end{array}.
\end{equation}

Let us consider the projection map $p_3:\RR^3\rightarrow\RR^2$ forgetting the third coordinate $\tau_{21}$ of the $\tau$--space. Then, we have $\bs{\tau_2}=p_3\circ\bs{\tau_2^*}$ and $p_3$ is a natural bijection between $\text{Im}(\bs{\tau_2^*})$ and $\text{Im}(\bs{\tau_2}).$ Hence, we can investigate the properties of the deterministic TDOA model by studying the simpler map $\bs{\tau_2}$. In analogy with our previous notations, we name $\tau$--plane the target set $\RR^2$ of $\bs{\tau_2}$. To illustrate our exposition, in Figure \ref{fig:tauimage} we draw $\text{Im}(\bs{\tau_2}),$ with receivers $\m{0}=(0,0)^T,\ \m{1}=(2,0)^T$ and $\m{2}=(2,2)^T,$ while in Figure \ref{fig:taucomimage} we show its relation with $\text{Im}(\bs{\tau_2^*}).$ We use this configuration of the sensors in all figures of the manuscript.
\begin{figure}[htb]
\begin{center}
\resizebox{6.5cm}{!}{
  \includegraphics{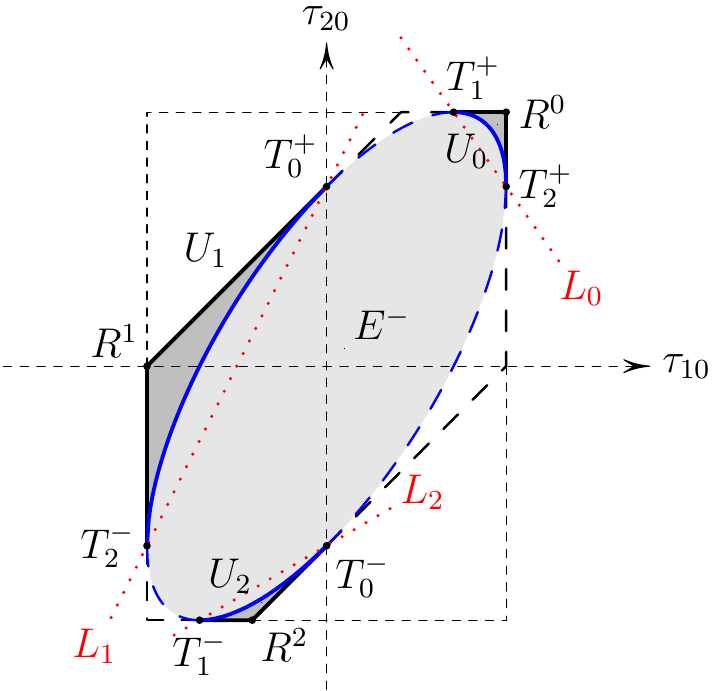}}
  \caption{\label{fig:tauimage}
The image of $\bs{\tau_2}$ is the gray subset of the hexagon $P_2$ with continuous and dashed sides. In the light gray region $E^-$ the map $\bs{\tau_2}$ is $ 1$--to--$1,$ while in the medium gray region $U_0 \cup U_1 \cup U_2$ the map $\bs{\tau_2}$ is $2$--to--$1.$ The continuous part of the boundary of the hexagon and the blue ellipse $E,$ together with the vertices $R^i,$ are in the image, and there $\bs{\tau_2}$ is $ 1$--to--$1.$ The points $T_i^\pm$ and the dashed boundaries do not belong to Im($\bs{\tau_2}$). Finally, the red dotted lines $L_0,L_1,L_2$ allow us to single out the medium gray regions $U_0,U_1,U_2.$}
\end{center}
\end{figure}

\begin{figure}[htb]
\begin{center}
\resizebox{8cm}{!}{
\includegraphics{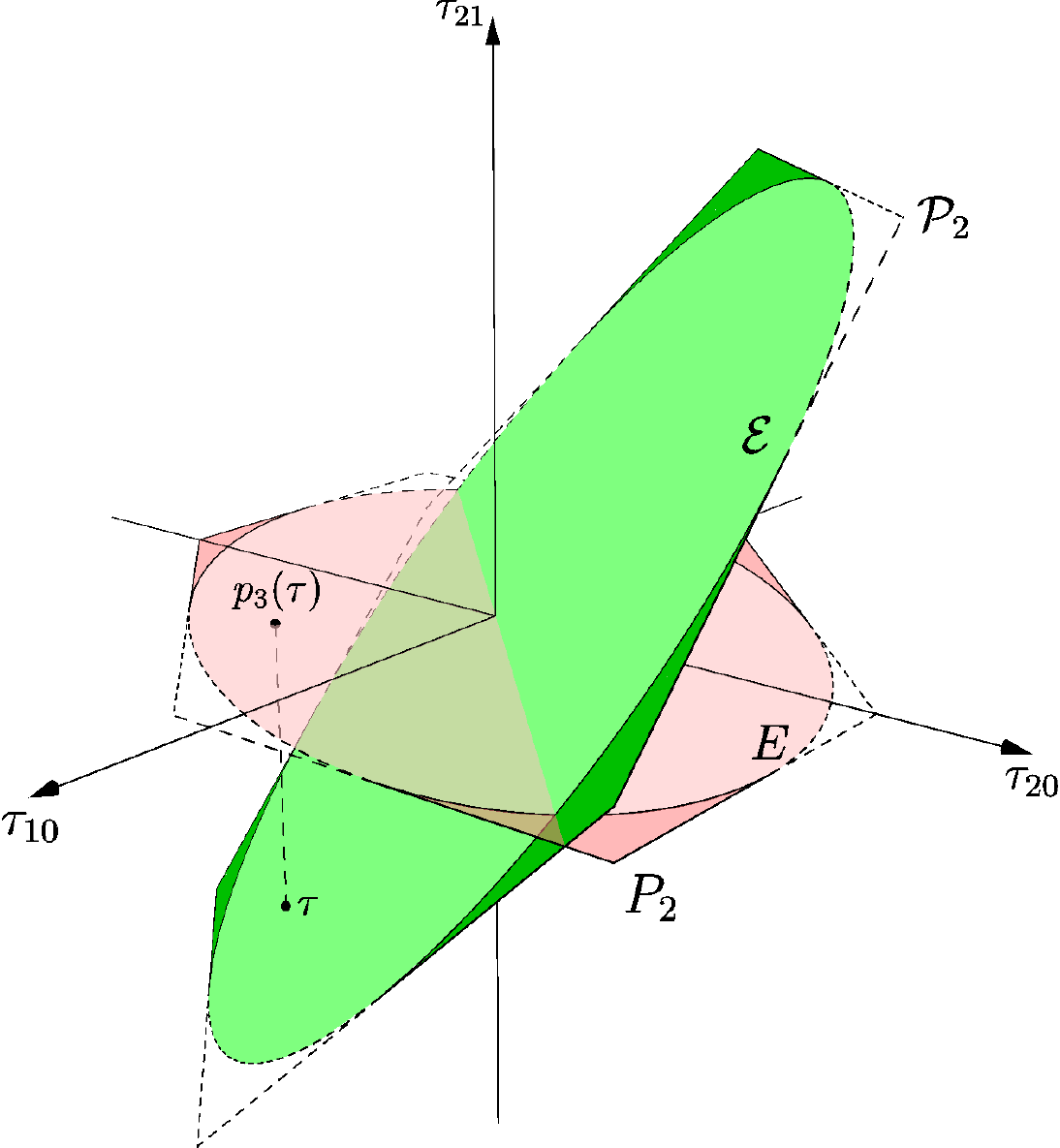}}
\caption{\label{fig:taucomimage}The image of $\bs{\tau_2^*}$ is the green subset of the hexagon $\mathcal{P}_2\subset\mathcal{H},$ while the image of $ \bs{\tau_2}$ is the red subset of $P_2.$ There is a 1--to--1 correspondence between Im($\bs{\tau_2^*}$) and Im($\bs{\tau_2}$) via the projection map $p_3$.   In the lightly shaded regions, the TDOA maps are $1$--to--$1$, while in the more darkly shaded regions the maps are $2$--to--$1$.}
\end{center}
\end{figure}

Following the analysis contained in Section 6 of \cite{Compagnoni2013a}, for any $\bt=(\tau_{10},\tau_{20})^T\in\RR^2$ we define the vectors
\begin{equation}
\begin{array}{l}
\mb{v}(\bt)=\ast(\tau_{20} \mb{d_{10}} - \tau_{10} \mb{d_{20}})\,,\qquad\qquad 
\mb{l_0}(\bt)=\displaystyle
\frac{\ast((d_{20}^2-\tau_{20}^2)\mb{d_{10}} -
(d_{10}^2-\tau_{10}^2)\mb{d_{20}})}
{2 \ast(\mb{d_{10}} \wedge \mb{d_{20}})}
\end{array}
\end{equation}
and the polynomials
\begin{equation}
a(\bt) = \Vert\mb{v}(\bt)\Vert^2 -\ast(\mb{d_{10}}\wedge\mb{d_{20}})^2,\qquad
b(\bt) = \langle\mb{v}(\bt),\mb{l_0}(\bt)\rangle,\qquad
c(\bt) = \Vert\mb{l_0}(\bt)\Vert^2.
\end{equation}
Im$(\bs{\tau_2})$ is a subset of the convex polytope $P_2$, the hexagon defined by the triangle inequalities:
\begin{equation}\label{eq:politopo}
\left\{ \begin{array}{l}
-d_{10} \leq \tau_{10} \leq d_{10} \\
-d_{20} \leq \tau_{20} \leq d_{20} \\
-d_{21} \leq \tau_{20} - \tau_{10} \leq d_{21}
\end{array} \right. .
\end{equation}
The vertices of $P_2$ in $\text{Im}(\bs{\tau_2})$ are $R^0=(d_{10},d_{20})^T,R^1=(-d_{10},d_{21}-d_{10})^T$ and $R^2=(d_{21}-d_{20},-d_{20})^T,$ which are the images of $\m{0},\m{1},\m{2}.$

There exists a unique ellipse $E$ that is tangent to each facet of $P_2$. This ellipse is the one defined by $a(\bt)=0$. We name $E^-$ the interior region of the ellipse, where $a(\bt)<0,$ and $E^+$ the exterior region, where $a(\bt)>0.$ The six points in $E\cap \partial P_2$ are
$$
T_i^+ = \left( \langle \mb{d_{10}},\tilde{\mb{d}}_{\mb{jk}} \rangle, 
\langle \mb{d_{20}}, \tilde{\mb{d}}_{\mb{jk}} \rangle \right)^T
\quad 
\mbox{ and }
\quad 
T_i^- = \left( -\langle \mb{d_{10}}, \tilde{\mb{d}}_{\mb{jk}} 
\rangle, -\langle \mb{d_{20}},\tilde{\mb{d}}_{\mb{jk}} \rangle \right)^T,
$$
where $0 \leq i,j,k \leq 2,\ k<j$ and $j,k \neq i.$
Let us consider the lines $L_0,L_1,L_2$ passing through the couples of points $\{T_1^+,T_2^+\},\{T_0^+,T_2^-\}$ and $\{T_0^-,T_1^-\},$ respectively (see Figure \ref{fig:tauimage}). With straightforward computations we have:
\begin{equation}\label{eq:lines}
\begin{array}{cl}
L_0: &\ l_0(\bt)=d_{20}\tau_{10}+d_{10}\tau_{20}-d_{10}d_{20}(1+\cos\alpha)=0,\\
L_1: &\ l_1(\bt)=-(d_{10}+d_{21})\tau_{10}+d_{10}\tau_{20}-d_{10}d_{21}(1+\cos\beta)=0,\\
L_2: &\ l_2(\bt)=d_{20}\tau_{10}-(d_{20}+d_{21})\tau_{20}-d_{20}d_{21}(1+\cos\gamma)=0.
\end{array}
\end{equation}
Then, we define the three sets $U_i,\ i=0,1,2,$ as:
\begin{equation}\label{eq:Ui}
\begin{array}{l}
U_i=\{\bt\in\mathring{P_2}\,|\,a(\bt)>0,\,l_i(\bt)>0\},\\
\end{array}
\end{equation}
where $\mathring{P_2}$ is the interior of $P_2$ defined by taking the strict inequalities in \eqref{eq:politopo}.

Using the above notation, the image of $\bs{\tau_2}$ is
\begin{equation}
\mbox{Im}(\bs{\tau_2}) = E^- \cup \bar{U}_0 \cup \bar{U}_1 \cup
\bar{U}_2 \setminus \{ T_0^\pm, T_1^\pm, T_2^\pm \},
\end{equation}
where $\bar{U}_i$ stays for the closure of $U_i$ with respect to the Euclidean topology. In particular, we have
\begin{equation}
\vert\bs{\tau_2}^{-1}(\bt)\vert=
\begin{cases}
2 & \text{if }\ \bt \in U_0 \cup U_1 \cup U_2,\\
1 & \text{if }\ \bt \in \mbox{Im}(\bs{\tau_2}) \setminus U_0 \cup U_1 \cup U_2.
\end{cases}
\end{equation}
For any given $ \bt \in U_0 \cup U_1 \cup U_2,$ the two preimages $\x_\pm(\bt)$ are given by
\begin{equation}\label{eq:inv-image}
\x_\pm(\bt) = \mb{m_0}+\mb{l_0}(\bt) + \lambda_\pm(\bt) \mb{v}(\bt),
\end{equation}
where $ \lambda_\pm(\bt) $ are the solutions of the quadratic equation $a(\bt)\lambda^2+2b(\bt)\lambda+c(\bt)=0:$
\begin{equation}\label{eq:lambda}
\lambda_{\pm}(\bt)=\frac{-b(\bt)\pm\sqrt{b(\bt)^2-a(\bt)c(\bt)}}{a(\bt)}\,.
\end{equation}
For $ \bt \in \mbox{Im}(\bt_2) \setminus U_0 \cup U_1 \cup U_2,$ we have to take only the $\x_+(\bt)$ solution.

In Figure \ref{fig:x-plane} we give two examples of the different localization regions in the $x$--plane. Roughly speaking, we have the preimage of the interior of the ellipse $ \tilde{E}^- =\bs{\tau_2}^{-1}(E^-)$, where the TDOA map is $1$--to--$1$ and the source localization is possible, and the preimages $ \tilde{U}_i=\bs{\tau_2}^{-1}(U_i)$, for $ i=0,1,2,$ where the map is $2$--to--$1$ and there is no way to uniquely locate the source. The transition is on the \emph{bifurcation curve} $\tilde{E} =\bs{\tau_2}^{-1}(E),$ that consists of three disjoint and unbounded arcs, one for each arc of $ E $ contained in $ \mbox{Im}(\bs{\tau_2}).$ As a point $\bt$ in one of the $U_i$ gets close to $E$, the solution $\x_+(\bt)$ gets close to a point on $\tilde{E}$, while $\x_-(\bt)$ goes to infinity. The sets $\tilde{E}^-, \tilde{U}_0, \tilde{U}_1, \tilde{U}_2 $ are open subsets of the $ x$--plane, separated by the three arcs of $ \tilde{E} $.

\begin{figure}[htb]
\begin{center}
\resizebox{11cm}{!}{
  \includegraphics[bb=-0 -0 1654 667]{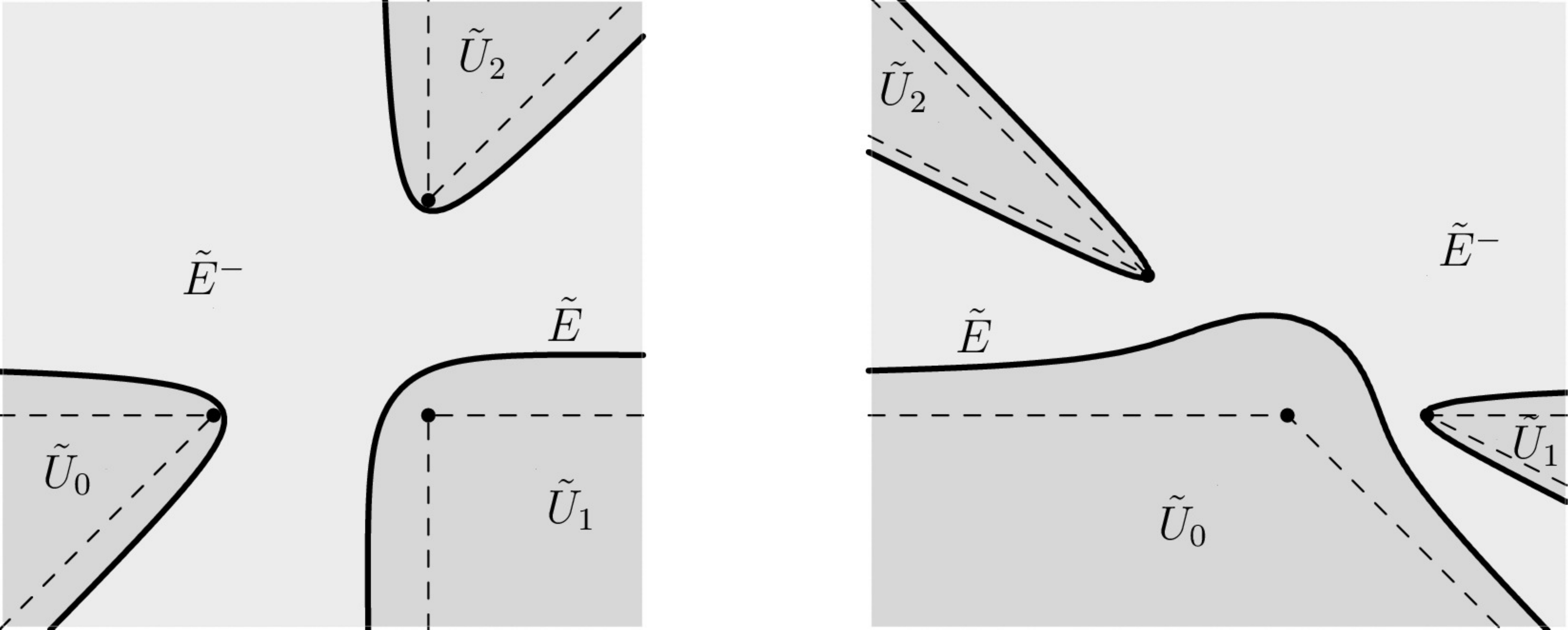}}
\caption{\label{fig:x-plane}Two examples of the different localization regions and the curve $ \tilde{E} $ in the $x$--plane. The sensors are the marked points $ \m{0}=(0,0)^T,\ \m{1}=(2,0)^T,$ and either $ \m{2}=(2,2)^T $ on the left, or $ \m{2}=(-2,2)^T $ on the right. Each curve $\tilde{E}$ separates the light gray region $\tilde{E}^-$, where the map $\bt_2$ is 1--1 and it is possible to locate the source, and the medium gray region $\tilde{U}_0\cup\tilde{U}_1\cup\tilde{U}_2$, where $\bs{\tau_2}$ is 2--1 and the localization is not unique. On the dashed lines the localization is possible but very sensitive to the measurement noise.}
\end{center}
\end{figure}

Finally, the union $D$ of the six dashed half--lines originating from the receivers is called \emph{degeneracy locus} of the TDOA map, where the rank of the Jacobian matrix of $\bs{\tau_2}$ drops. $D$ is the zero set of the Jacobian
\begin{equation}\label{eq:degloc}
D:\;\ast(\tilde{\mb{d}}_{\mb{1}}(\x) \wedge \tilde{\mb{d}}_{\mb{0}}(\x)-
\tilde{\mb{d}}_{\mb{2}}(\x) \wedge \tilde{\mb{d}}_{\mb{0}}(\x) +
\tilde{\mb{d}}_{\mb{2}}(\x) \wedge\tilde{\mb{d}}_{\mb{1}}(\x))=0
\end{equation}
and it is the preimage of the six segments in $\partial P_2\cap\text{Im}(\bs{\tau_2}).$
On $D$ the two solution $\x_\pm(\bt)$ are coincident, thus the TDOA map is $1$--to--$1.$ Furthermore, $D$ divides each $ \tilde{U}_i $ into two connected components and $\bs{\tau_2}$ is a bijection between each of them and the corresponding $U_i$. For future reference, we observe that the lines $r_i$ supporting $D$ (see Figure \ref{rette-acustica}) have equations
\begin{equation}\label{eq:rette}
r_0: \; \ast(\mb{d_2}(\x) \wedge \mb{d_1}(\x))=0,\qquad
r_1: \; \ast(\mb{d_2}(\x) \wedge \mb{d_0}(\x))=0,\qquad
r_2: \; \ast(\mb{d_1}(\x) \wedge \mb{d_0}(\x))=0.
\end{equation}


\section{The statistical model}\label{sc:StatMod}
In the presence of measurement errors on the data, we must resort to statistical modeling. In this section, we focus on the definition of the statistical models for TDOA--based localization in the minimal planar scenario. As we will see, we will need to define a plurality of models in order to take care of issues of non-uniqueness in source localization. In particular, we will consider four distinct curved exponential families, corresponding to the four different regions in the $x$--plane.
In our analysis we will follow the notation of \cite{Amari2000}. This will be particularly useful in Section \ref{sc:ASI}, where we study the source estimation accuracy via the asymptotic analysis techniques given by Information Geometry.

\subsection{The complete and the reduced models}
In this subsection we adapt the analysis contained in Sections 3 and 4 of \cite{Compagnoni2016a} to the case of three TDOAs. For sake of completeness, we include in this Section part of the mathematical derivation in \cite{Compagnoni2016a}. In this manuscript we assume the noise to be Gaussian \cite{Benesty2004}, therefore the TDOAs associated to a source in $\x$ are described by
\begin{equation}\label{eq:TDOAstatmod}
\bs{\hat\tau_2^*}(\x)=\bs{\tau_2^*}(\x)+\bs{\epsilon},\qquad
\text{where}\qquad \bs{\epsilon}\sim N(\mb{0},\bs\Sigma)
\end{equation}
and the covariance matrix $\bs\Sigma$ is known and non singular. This is the most common choice in application scenarios \cite{Spencer2010,Kaune2011}. From a mathematical standpoint, this error distribution allows us to use the many existing tools for the study of statistical exponential families. However, our analysis is helpful also in situations where errors are no longer Gaussian, e.g. in presence of outliers due to phenomena such as interferer sources or multipath propagation of the signal. For example, in \cite{Compagnoni2016c} it has been defined an outliers removal procedure that identifies a TDOA as an outlier exactly when it does not satisfy the gaussianity assumption.

We define \eqref{eq:TDOAstatmod} as the \emph{complete statistical model}.
This means that the probability density function (p.d.f.) for the measured TDOAs $\bs{\hat\tau^*}=(\hat\tau_{10},\hat\tau_{20},\hat\tau_{21})^T$ is
\begin{equation}\label{eq:TDOAprob}
p(\bs{\hat\tau^*};\bs{\tau_2^*}(\x),\bs\Sigma)=
\frac{1}{\sqrt{(2\pi)^3|\bs\Sigma|}}\,
\exp\left[-\frac{1}{2}(\bs{\hat\tau^*}-\bs{\tau_2^*}(\x))^T\; \bs\Sigma^{-1}(\bs{\hat\tau^*}-\bs{\tau_2^*}(\x))\right]\,.
\end{equation}

From a geometric standpoint, the Fisher matrix $\bs\Sigma^{-1}$ defines a Euclidean structure on the $\tau$--space, with  scalar product
\begin{equation}\label{eq:Fisherscalarprod}
\langle\mb{v_1},\mb{v_2}\rangle_{\boldsymbol\Sigma^{\unaryminus 1}} = \mb{v_1}^T\; \boldsymbol\Sigma^{-1} \mb{v_2},\qquad
\mb{v_1},\mb{v_2}\in\RR^3 \; .
\end{equation}
The associated distance is the Mahalanobis distance
\begin{equation}\label{eq:Mahalanobis}
\Vert\mb{v}\Vert_{\bs\Sigma^{\unaryminus 1}}=
\sqrt{\langle\mb{v},\mb{v}\rangle_{\boldsymbol\Sigma^{\unaryminus 1}}}\,,
\qquad \mb{v}\in\RR^3
\end{equation}
and we can rewrite the p.d.f. \eqref{eq:TDOAprob} as:
\begin{equation}\label{eq:TDOAprob2}
p(\bs{\hat\tau^*};\bs{\tau_2^*}(\x),\bs\Sigma)=
\frac{1}{\sqrt{(2\pi)^3|\bs\Sigma|}}\,
\exp\left[-\frac{1}{2}\,
\Vert\bs{\hat\tau^*}-\bs{\tau_2^*}(\x))\Vert_{\bs\Sigma^{\unaryminus 1}}^2\right]\,.
\end{equation}
\begin{theorem}\label{th:ss}
Let $\mathcal{P_H}(\bs{\hat\tau^*};\bs\Sigma)$ be the orthogonal projection of $\bs{\hat\tau^*}\in\RR^3$ on the plane $\mathcal{H}$ defined by $\tau_{10}-\tau_{20}-\tau_{21}=0$, with respect to $\langle\ ,\ \rangle_{\Sigma^{\unaryminus 1}}.$ Then, $\bs{\hat\tau}=p_3\circ\mathcal{P_H}\,(\bs{\hat\tau^*};\bs\Sigma)$ is a sufficient statistic for the underlying parameter $\x.$
\end{theorem}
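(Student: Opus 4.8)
\emph{Proof proposal.} The plan is to invoke the Fisher–Neyman factorization criterion. The essential geometric input, already recorded in Section~\ref{sc:TDOAspace}, is that every noiseless measurement lies in $\mathcal{H}$: the linear relation $\tau_{21}(\x)=\tau_{20}(\x)-\tau_{10}(\x)$ forces $\bs{\tau_2^*}(\x)\in\mathcal{H}$ for all $\x$, and $\mathcal{H}$ is a $2$–dimensional linear subspace of $\RR^3$ (it passes through the origin). Thus the whole parametric family sits inside $\mathcal{H}$, while the observed $\bs{\hat\tau^*}$ ranges over all of $\RR^3$; this strongly suggests that the only statistically relevant part of $\bs{\hat\tau^*}$ is its $\bs\Sigma^{\unaryminus 1}$–orthogonal projection onto $\mathcal{H}$.

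First I would set $P=\mathcal{P_H}(\bs{\hat\tau^*};\bs\Sigma)$ and decompose $\bs{\hat\tau^*}-\bs{\tau_2^*}(\x)=(\bs{\hat\tau^*}-P)+(P-\bs{\tau_2^*}(\x))$. By definition of the projection with respect to $\langle\ ,\ \rangle_{\bs\Sigma^{\unaryminus 1}}$, the residual $\bs{\hat\tau^*}-P$ is $\bs\Sigma^{\unaryminus 1}$–orthogonal to every vector of $\mathcal{H}$; since both $P$ and $\bs{\tau_2^*}(\x)$ lie in the linear subspace $\mathcal{H}$, their difference $P-\bs{\tau_2^*}(\x)$ lies in $\mathcal{H}$ as well. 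Hence the cross term in the expansion of the squared Mahalanobis norm vanishes, and the Pythagorean identity gives $\Vert\bs{\hat\tau^*}-\bs{\tau_2^*}(\x)\Vert_{\bs\Sigma^{\unaryminus 1}}^2=\Vert\bs{\hat\tau^*}-P\Vert_{\bs\Sigma^{\unaryminus 1}}^2+\Vert P-\bs{\tau_2^*}(\x)\Vert_{\bs\Sigma^{\unaryminus 1}}^2$.

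Substituting this into the p.d.f.~\eqref{eq:TDOAprob2} factorizes it as the product of $h(\bs{\hat\tau^*})=\frac{1}{\sqrt{(2\pi)^3|\bs\Sigma|}}\exp[-\tfrac12\Vert\bs{\hat\tau^*}-P\Vert_{\bs\Sigma^{\unaryminus 1}}^2]$, which does not involve $\x$, and $g(P,\x)=\exp[-\tfrac12\Vert P-\bs{\tau_2^*}(\x)\Vert_{\bs\Sigma^{\unaryminus 1}}^2]$, which depends on the data only through $P$. By Fisher–Neyman, $P$ is therefore sufficient for $\x$. Finally, since on $\mathcal{H}$ the third coordinate is determined by the first two, the restriction $p_3|_{\mathcal{H}}$ is a linear bijection onto $\RR^2$; as $P\in\mathcal{H}$, the statistic $\bs{\hat\tau}=p_3(P)$ carries exactly the same information as $P$, and a bijective reparametrization of a sufficient statistic is again sufficient. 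This yields the claim.

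The step I expect to be the crux is the vanishing of the cross term, which hinges on using the \emph{Mahalanobis} orthogonality rather than the naive Euclidean one: it is precisely the $\bs\Sigma^{\unaryminus 1}$–orthogonality of the residual to $\mathcal{H}$ that decouples the $\x$–independent part of the exponent from the $\x$–dependent part, which would otherwise leak into an irreducible mixed term. The remaining points — that $\mathcal{H}$ is linear (so orthogonal projection is a linear map and $P-\bs{\tau_2^*}(\x)\in\mathcal{H}$) and that $p_3|_{\mathcal{H}}$ is invertible — are routine, but both are needed to pass from a sufficient statistic valued in $\mathcal{H}\subset\RR^3$ to the reduced statistic $\bs{\hat\tau}$ valued in the $\tau$–plane.
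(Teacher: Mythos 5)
Your proof is correct, and it takes a genuinely more self-contained route than the paper does. The paper's own proof is two sentences long: it cites Theorem 1 of \cite{Compagnoni2016a} for the fact that $\mathcal{P_H}(\bs{\hat\tau^*};\bs\Sigma)$ is sufficient for $\x$, and then observes, exactly as you do in your last step, that $p_3$ is $1$--$1$ between $\mathcal{H}$ and $\RR^2$, so that composing with this bijection preserves sufficiency. What you do differently is to prove the cited step from scratch: writing $P=\mathcal{P_H}(\bs{\hat\tau^*};\bs\Sigma)$, you establish the Pythagorean splitting $\Vert\bs{\hat\tau^*}-\bs{\tau_2^*}(\x)\Vert_{\bs\Sigma^{\unaryminus 1}}^2=\Vert\bs{\hat\tau^*}-P\Vert_{\bs\Sigma^{\unaryminus 1}}^2+\Vert P-\bs{\tau_2^*}(\x)\Vert_{\bs\Sigma^{\unaryminus 1}}^2$, valid because $\mathcal{H}$ is a linear subspace containing the whole image of $\bs{\tau_2^*}$ (so $P-\bs{\tau_2^*}(\x)\in\mathcal{H}$) while $\bs{\hat\tau^*}-P$ is $\bs\Sigma^{\unaryminus 1}$-orthogonal to $\mathcal{H}$, and you then conclude by Fisher--Neyman factorization. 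Your identification of the crux is also accurate: the cross term dies only for Mahalanobis orthogonality, which is precisely why the projection in the statement is taken with respect to $\langle\ ,\ \rangle_{\bs\Sigma^{\unaryminus 1}}$ rather than the Euclidean product. Your route buys a proof readable without the prior paper and makes explicit where each hypothesis enters; the paper's route buys brevity by deferring that work to a reference (whose proof is presumably the same factorization argument). One detail worth flagging: your argument needs $\bs{\tau_2^*}(\x)\in\mathcal{H}$ for every $\x$, which holds for the plane $\tau_{10}-\tau_{20}+\tau_{21}=0$ defined in Section \ref{sc:TDOAspace}; the equation $\tau_{10}-\tau_{20}-\tau_{21}=0$ appearing in the theorem statement has a sign typo, and you have implicitly (and correctly) worked with the former.
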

\begin{proof}
$\mathcal{P_H}(\bs{\hat\tau^*};\bs\Sigma)$ is a sufficient statistic for $\x,$ see Theorem 1 in \cite{Compagnoni2016a}. Since the forgetting map $p_3$ is $1$--$1$ between $\mathcal{H}$ and $\RR^2,$ the claim follows.
\end{proof}
Theorem \ref{th:ss} states that all the information about the source position is contained in $\bs{\hat\tau}\in\RR^2$. In order to obtain the p.d.f. for $\bs{\hat\tau},$ we observe that $p_3\circ\mathcal{P_H}\,(\bs{\tau_2^*}(\x);\bs\Sigma)=\bs{\tau_2}(\x)$ and we define $\bs\Sigma_2=\mb{P}\bs\Sigma\mb{P}^T,$ where $\mb{P}$ is the representative matrix of $p_3\circ\mathcal{P_H}$ with respect to the standard basis of $\RR^3$ and $\RR^2$ (see \cite{Compagnoni2016a}). From the general transformation rule for the multivariate normal distributions under linear mapping, it follows:
\begin{equation}\label{eq:TDOAprobproj}
p(\bs{\hat\tau};\bs{\tau_2^*}(\x),\bs\Sigma)=p(\bs{\hat\tau};\bs{\tau_2}(\x),\bs{\Sigma_2})=
\frac{1}{2\pi\sqrt{|\bs{\Sigma_2}|}}\,
\exp\left[-\frac{1}{2}\,
\Vert\bs{\hat\tau}-\bs{\tau_2}(\x))\Vert_{\bs{\Sigma_2}^{\unaryminus 1}}^2\right]\,.
\end{equation}
where it appears the Mahalanobis distance defined by $\bs{\Sigma_2}^{-1}$ on $\RR^2$. This means that the analysis of the complete statistical model \eqref{eq:TDOAstatmod} is equivalent to the analysis of the reduced (2D) TDOA statistical model:
\begin{equation}\label{eq:TDOAstatmodrid}
\bs{\hat\tau_2}(\x)=\bs{\tau_2}(\x)+\bs{\epsilon_2},\qquad \text{where}\qquad \bs{\epsilon_2}\sim N(\mb{0},\bs{\Sigma_2}).
\end{equation}
For this reason and without loss of generality, in the rest of the paper we focus on the analysis of \eqref{eq:TDOAstatmodrid}.

\subsection{The restricted models are curved exponential families}
We now address the problem of ambiguity in source localization, i.e. the fact that the map $\bs{\tau_2}$ is not globally invertible. The simplest way to solve this issue is to define distinct statistical models for each of the maximal subsets of the $x$--plane where $\bs{\tau_2}$ is injective. Throughout the rest of the manuscript we call these the \emph{restricted models}. In the following proposition we define and study the properties of the above maximal subsets of $\RR^2$.
\begin{proposition}\label{prop:restmod}
In the $x$--plane, let us define the subsets
\begin{equation*}
\begin{array}{l}
\Omega=\{\x\in\RR^2|\ast(\tilde{\mb{d}}_{\mb{1}}(\x) \wedge \tilde{\mb{d}}_{\mb{0}}(\x)-
\tilde{\mb{d}}_{\mb{2}}(\x) \wedge \tilde{\mb{d}}_{\mb{0}}(\x) +
\tilde{\mb{d}}_{\mb{2}}(\x) \wedge\tilde{\mb{d}}_{\mb{1}}(\x))<0\},\\[.5mm]
\Omega_0=\{\x\in\RR^2|\ast(\mb{d_1}(\x) \wedge \mb{d_0}(\x))>0,
\ast(\mb{d_2}(\x) \wedge \mb{d_0}(\x))<0\},\\[.5mm]
\Omega_1=\{\x\in\RR^2|\ast(\mb{d_1}(\x) \wedge \mb{d_0}(\x))>0,
\ast(\mb{d_2}(\x) \wedge \mb{d_1}(\x))>0\},\\[.5mm]
\Omega_2=\{\x\in\RR^2|\ast(\mb{d_2}(\x) \wedge \mb{d_0}(\x))<0,
\ast(\mb{d_2}(\x) \wedge \mb{d_1}(\x))>0\}.
\end{array}
\end{equation*}
Then:
\begin{enumerate}
\item
the above subsets are open and disjoint from each other and their union is dense in $\RR^2$. In particular, $\RR^2\setminus(\Omega\cup\Omega_0\cup\Omega_1\cup\Omega_2)=D,$ the degeneracy locus of $\bs{\tau_2};$
\item
the restriction of $\bs{\tau_2}$ on each subset is differentiable and injective, with $\text{Im}(\bs{\tau_2}|_\Omega)=\text{Im}(\bs{\tau_2})\setminus\partial P_2$ and $\text{Im}(\bs{\tau_2}|_{\Omega_i})=U_i,\ i=0,1,2.$ Furthermore, we have $\left(\bs{\tau_2}|_\Omega\right)^{-1}=\x_+$ and $\left(\bs{\tau_2}|_{\Omega_i}\right)^{-1}=\x_-,\ i=0,1,2.$
\end{enumerate}
\end{proposition}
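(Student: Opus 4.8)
The plan is to reduce the whole proposition to two algebraic identities for the three signed areas
\[
A_0=\ast(\mb{d_2}(\x)\wedge\mb{d_1}(\x)),\qquad A_1=\ast(\mb{d_2}(\x)\wedge\mb{d_0}(\x)),\qquad A_2=\ast(\mb{d_1}(\x)\wedge\mb{d_0}(\x)),
\]
which by \eqref{eq:rette} are affine in $\x$ and vanish exactly on $r_0,r_1,r_2$. First I would record that, writing $J(\x)$ for the Jacobian in \eqref{eq:degloc} and clearing the denominators $\tilde{\mb{d}}_{\mb{i}}=\mb{d_i}(\x)/d_i$,
\[
d_0\,d_1\,d_2\,J(\x)=A_0\,d_0-A_1\,d_1+A_2\,d_2 ,
\]
so $\operatorname{sgn}J(\x)$ is the sign of the right-hand side. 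Then I would establish the \emph{affine identity} $A_0-A_1+A_2=-\ast(\mb{d_{10}}\wedge\mb{d_{20}})<0$ (expand by bilinearity: the part linear in $\x$ cancels, and the orientation assumption fixes the sign), together with the planar \emph{Cramer identity} $A_0\,\mb{d_0}(\x)-A_1\,\mb{d_1}(\x)+A_2\,\mb{d_2}(\x)=\mb{0}$, valid because any three vectors in $\RR^2$ are linearly dependent with these cofactors.

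For part (1), openness is immediate since each set is cut out by strict inequalities among continuous functions, and the receivers (where $\tilde{\mb{d}}_{\mb{i}}$ is undefined) lie on $D$ and are harmless. Setting $c_0=A_0$, $c_1=-A_1$, $c_2=A_2$, the Cramer identity reads $\sum_i c_i\,\mb{d_i}(\x)=\mb{0}$, the Jacobian formula reads $d_0d_1d_2\,J=\sum_i c_i\,d_i$, and the affine identity reads $c_0+c_1+c_2=-\ast(\mb{d_{10}}\wedge\mb{d_{20}})<0$. Away from $r_0\cup r_1\cup r_2$ no two $\mb{d_i}(\x)$ are parallel, so isolating the $\mb{d_i}(\x)$ whose coefficient carries the sign opposite to the other two and applying the strict triangle inequality shows $\sum_i c_i d_i>0$ when exactly one $c_i$ is negative and $\sum_i c_i d_i<0$ when at least two are; the case ``all $c_i>0$'' is ruled out by the affine identity. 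Matching the one-negative patterns against the definitions (using the affine identity to supply the third sign) identifies $\{J>0\}$ off the lines with $\Omega_0\sqcup\Omega_1\sqcup\Omega_2$ and $\{J<0\}$ with $\Omega$; in particular the $\Omega_i$ are pairwise disjoint and disjoint from $\Omega$. A short case check on the lines themselves (the segment between the two receivers on each $r_i$ has $J<0$, hence lies in $\Omega$, while the two outward rays have $J=0$) then gives $\RR^2\setminus(\Omega\cup\Omega_0\cup\Omega_1\cup\Omega_2)=\{J=0\}=D$; as $D$ is a finite union of half-lines it is closed and nowhere dense, so the union is dense.

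For part (2), on each region $J\neq0$ and $\bs{\tau_2}$ is smooth (receivers excluded), hence a local diffeomorphism. To globalize I would invoke the structure recalled in Section \ref{sc:TDOAspace}: $\RR^2\setminus D=\tilde{E}^-\cup\tilde{U}_0\cup\tilde{U}_1\cup\tilde{U}_2$, with $\bs{\tau_2}$ a bijection of $\tilde{E}^-$ onto $E^-$ (inverse $\x_+$), while $D$ splits each $\tilde{U}_i$ into two components mapped bijectively onto $U_i$, one carrying $\x_+$ and one $\x_-$, with $J$ changing sign across $D$. It remains to pin the branch–sign correspondence. On $\tilde{E}^-$ one has $J<0$, i.e. $\tilde{E}^-\subset\Omega$; the $\x_+$–component of $\tilde{U}_i$ meets $\tilde{E}^-$ along the bifurcation arc $\tilde{E}$, and since $\bs{\tau_2}(\tilde{E})=E$ lies in the interior of $P_2$ whereas $\bs{\tau_2}(D)\subset\partial P_2$, the arc $\tilde{E}$ is disjoint from $D$, so $J$ keeps its sign there and this component also lies in $\{J<0\}=\Omega$. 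The remaining, $\x_-$–component of $\tilde{U}_i$ then has $J>0$, hence lies in $\Omega_0\cup\Omega_1\cup\Omega_2$, and since it is mapped onto $U_i$ it must equal $\Omega_i$. Thus $\Omega$ is the disjoint union of $\tilde{E}^-$ and the three $\x_+$–components, on which $\bs{\tau_2}$ is injective (the four pieces have the pairwise disjoint images $E^-,U_0,U_1,U_2$), with inverse $\x_+$ and total image $E^-\cup U_0\cup U_1\cup U_2=\mathrm{Im}(\bs{\tau_2})\setminus\partial P_2$, while $\bs{\tau_2}|_{\Omega_i}$ is a bijection onto $U_i$ with inverse $\x_-$.

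The routine part is the two identities plus the triangle-inequality sign analysis. The genuinely delicate point will be the branch–sign matching in part (2): one must be certain that the analytic quantity $\operatorname{sgn}J$ selects exactly the geometric branch $\x_+$ versus $\x_-$, which hinges on the fold locus $D=\{J=0\}$ being disjoint from the bifurcation curve $\tilde{E}$ (equivalently, that crossing $\tilde{E}$ sends the second preimage to infinity rather than merging two preimages). A secondary nuisance is the bookkeeping on $r_0\cup r_1\cup r_2$ required to show the complement of the four regions is precisely $D$ and not a larger subset of the lines.
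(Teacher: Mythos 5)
Your part (1) is correct, and it is genuinely more self-contained than the paper's own argument: the paper identifies each $\Omega_i$ as the open wedge with apex $\m{i}$ by inspecting the line equations \eqref{eq:rette} and Figure \ref{fig:spazio-x}, and then invokes the sign change of the Jacobian across $D$, whereas your Cramer identity $A_0\mb{d_0}(\x)-A_1\mb{d_1}(\x)+A_2\mb{d_2}(\x)=\mb{0}$, the affine identity $A_0-A_1+A_2=-\ast(\mb{d_{10}}\wedge\mb{d_{20}})<0$, and the strict-triangle-inequality case analysis replace the figure-based step by algebra (both identities check out, as does the sign bookkeeping on the lines $r_i$). For part (2) the paper offers no real argument --- it is a one-line citation of the properties recalled in Section \ref{sc:TDOAspace} from \cite{Compagnoni2013a} --- so your attempt to actually carry out the branch matching is doing more than the paper does; however, as written it has two concrete gaps.

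First, the assertion ``on $\tilde{E}^-$ one has $J<0$'' is never justified, and it follows neither from your part (1) nor from the recalled structure. It can be fixed cheaply: $\tilde{E}^-$ is connected (it is the single region bounded by the three arcs of $\tilde{E}$) and disjoint from $D$ (its image $E^-$ avoids $\partial P_2$), so $J$ has constant sign there; the circumcenter $\x_c$ of the receivers lies in $\tilde{E}^-$, because $\bs{\tau_2}(\x_c)=(0,0)^T$ and $a((0,0)^T)=-\ast(\mb{d_{10}}\wedge\mb{d_{20}})^2<0$, and at $\x_c$ one has $d_0=d_1=d_2=R$, so your cleared Jacobian formula gives $d_0d_1d_2\,J(\x_c)=R\,(c_0+c_1+c_2)=-R\ast(\mb{d_{10}}\wedge\mb{d_{20}})<0$. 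Second, and more seriously, the closing inference ``the $\x_-$--component of $\tilde{U}_i$ lies in $\Omega_0\cup\Omega_1\cup\Omega_2$ and is mapped onto $U_i$, hence it equals $\Omega_i$'' is a non sequitur: at that stage nothing in your argument ties the label of the wedge to the label of the region in the $\tau$--plane, so connectivity of the pieces only yields that the $\x_-$--component of $\tilde{U}_i$ equals $\Omega_{\sigma(i)}$ for some permutation $\sigma$ of $\{0,1,2\}$. The missing link is the vertex correspondence: $\m{j}$ is the apex of $\Omega_j$, hence $\m{j}\in\overline{\Omega_j}$; $\bs{\tau_2}$ is continuous (though not differentiable) at the receivers with $\bs{\tau_2}(\m{j})=R^j$; therefore $R^j\in\overline{\bs{\tau_2}(\Omega_j)}=\overline{U_{\sigma^{-1}(j)}}$, and since $R^j$ lies in the closure of $U_k$ only for $k=j$ (the tangency points $T_i^\pm$ separate the vertices along $\partial P_2$), this forces $\sigma=\mathrm{id}$. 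With these two insertions your proof is complete.
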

\begin{proof}
(1) For each $i=0,1,2,$ the subset $\Omega_i$ is defined as the intersection of two open half--planes. From equations \eqref{eq:rette} defining $r_0,r_1,r_2$, it is straightforward to verify that $\Omega_i$ corresponds to the open subsets with vertex $\m{i}$ drawn in Figure \ref{fig:spazio-x}. This implies that $\Omega_i\cap\Omega_j=\emptyset$ if $i\neq j$ and the boundary of $\displaystyle\cup_{i=0}^2\Omega_i$ is $D.$ On the other hand, from equation \eqref{eq:degloc} we have that $D$ is the boundary of $\Omega.$ Moreover, if $\x\in\Omega$ one has $\ast(\mb{d_1}(\x) \wedge \mb{d_0}(\x))<0,$ $\ast(\mb{d_2}(\x) \wedge \mb{d_0}(\x))>0$ and $\ast(\mb{d_2}(\x) \wedge \mb{d_1}(\x))<0.$ Therefore
$\ast(\tilde{\mb{d}}_{\mb{1}}(\x) \wedge \tilde{\mb{d}}_{\mb{0}}(\x)-
\tilde{\mb{d}}_{\mb{2}}(\x) \wedge \tilde{\mb{d}}_{\mb{0}}(\x) +
\tilde{\mb{d}}_{\mb{2}}(\x) \wedge\tilde{\mb{d}}_{\mb{1}}(\x))<0$ for every $\x\in\Omega.$ Since the Jacobian of $\bt_2$ changes its sign on $D,$ this proves that $\Omega$ is the remaining open subset in Figure \ref{fig:spazio-x} and the first claim follows.\\
(2) The second claim is a consequence of the first and the properties of $\bt_2$ proved in \cite{Compagnoni2013a} and summarized in Section \ref{sc:TDOAspace}.
\end{proof}

\noindent In order to simplify the notations, let us define $U=\text{Im}(\bt_2|_\Omega).$ An explicit description of $U$ follows easily by considering in the $\tau$--plane the cubic curve $C$ defined by equation $b(\bt)=0$ and the associated open regions $C^+$ and $C^-,$ defined as $b(\bt)>0$ and $b(\bt)<0$ respectively (see Figure \ref{fig:Ellisse-Politopo}). From the results in Section $6$ of \cite{Compagnoni2013a}, we have:
\begin{equation}\label{eq:U}
U=\mathring{P_2}\cap(E^-\cup C^+)=
\{\bt\in\mathring{P_2}\,|\,a(\bt)<0\text{ or }b(\bt)>0\}.
\end{equation}
In Figure \ref{fig:spazio-x} we draw the sets $\Omega,\Omega_0,\Omega_1,\Omega_2,$ while in Figure \ref{fig:modelli} there are their images $U$ and $U_i,\ i=0,1,2.$
\begin{figure}[htb]
\begin{center}
\resizebox{5.5cm}{!}{
  \includegraphics[bb=206 296 405 495]{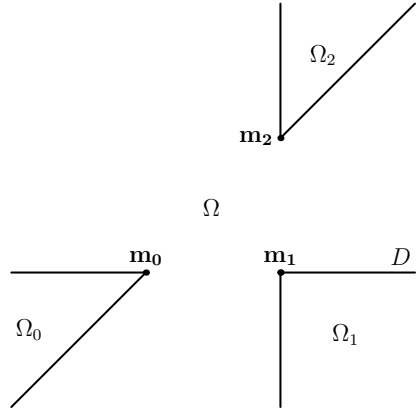}}\\
  \caption{$\Omega,\Omega_0,\Omega_1,\Omega_2$ are disjoint open subsets of the $x$--plane. The closure of their union is $\RR^2,$ while their boundaries give the discriminacy locus $D$ of $\bs{\tau_2}.$ On each subset the map $\bs{\tau_2}$ is injective.}
  \label{fig:spazio-x}
\end{center}
\end{figure}
\begin{figure}[htb]
\begin{center}
\resizebox{6cm}{!}{
  \includegraphics[bb=202 267 409 524]{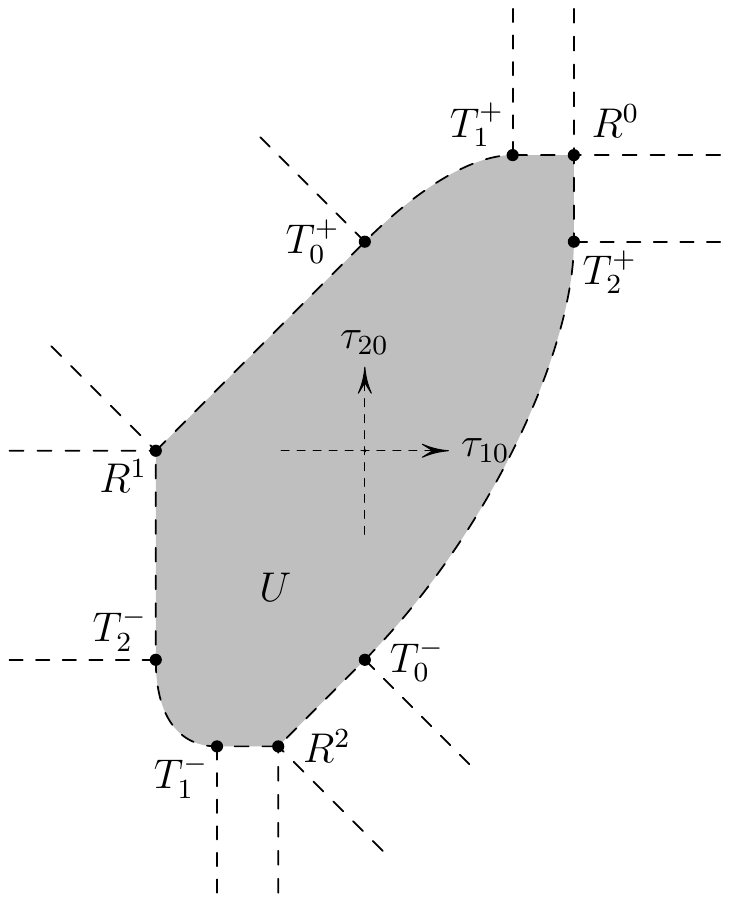}}
  \hspace{15mm}
\resizebox{6cm}{!}{
  \includegraphics[bb=202 267 409 524]{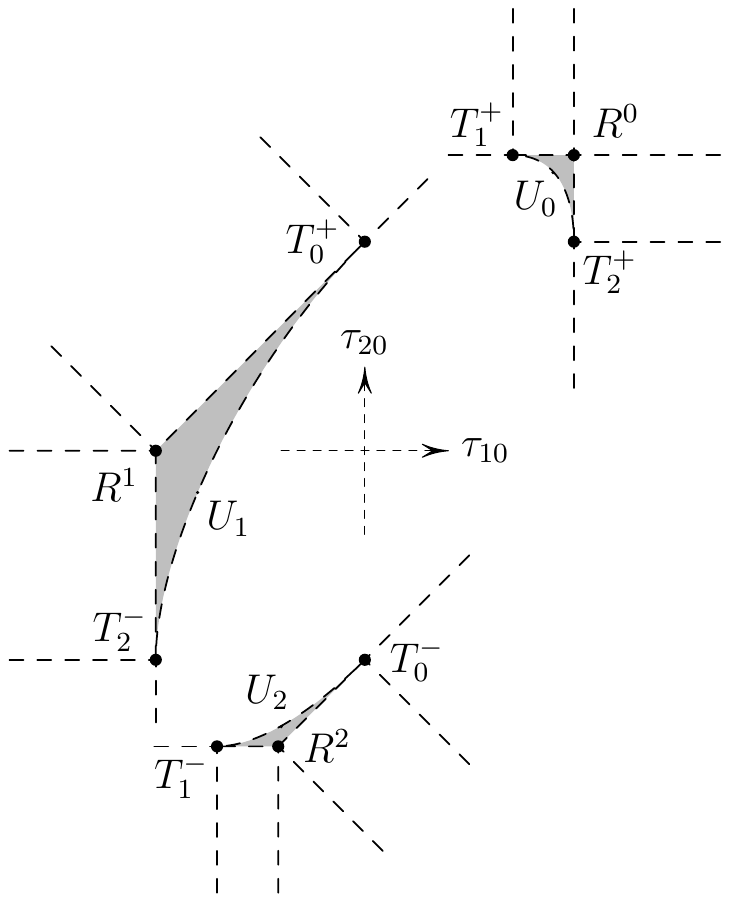}}\\
  \caption{On the left, the medium gray subset is $U=\text{Im}(\bs{\tau_2}|_\Omega)$. On the right, the medium gray subset having $R^i$ as vertex is $U_i=\text{Im}(\bs{\tau_2}|_{\Omega_i}),\ i=0,1,2.$ The dashes lines subdivide the $\tau$--plane into different regions according to the solution of the Maximum Likelihood Estimator for each restricted model (see Section \ref{sc:MLE}).}
  \label{fig:modelli}
\end{center}
\end{figure}

At this point, following \cite{Amari2000}, we recall the definition of curved exponential family.
\begin{definition}\label{def:expfam}
Let $\mathcal{X}$ be a set and $\Theta\subseteq\RR^n$ an open subset. An $n$--dimensional exponential family on $\mathcal{X},$ with parameters $\bs\theta\in\Theta$ and random variables $\mb{y}\in\mathcal{X},$ is a set $S$ of probability density functions
$$
p(\mb{y};\bs{\theta})=
\exp\left[C(\mb{y})+\sum_{i=1}^n\theta_iF_i(\mb{y})-\psi(\bs\theta)\right],
$$
where $\{C,F_1,\ldots,F_n\}$ are $n+1$ real valued functions on $\mathcal{X}$ and $\psi\in C^\infty(\Theta,\RR).$
\end{definition}
\noindent Given an exponential family $S,$ the mapping $\phi:S\rightarrow\Theta$ given by $\phi(p(\mb{y};\bs\theta))=\bs\theta$ is a global coordinate system of $S.$ By composing $\phi$ with every $C^\infty$ diffeomorphism of $\RR^n,$ we obtain a $C^\infty$ atlas on $S.$ This allows us to consider $S$ as a $C^\infty$ differentiable manifold, a so called \emph{statistical manifold}.

The $n$--dimensional multivariate normal distribution is an important example of exponential family. Indeed, let $\mb{y}\sim N(\bs\mu,\bs\Sigma).$ In this context, the probability density function becomes
\begin{equation}\label{eq:MND}
\begin{array}{lcl}
p(\mb{y};\bs\mu,\bs\Sigma) & = &
\displaystyle\frac{1}{\sqrt{(2\pi)^n|\bs\Sigma|}}\,
\exp\left[-\frac{1}{2}\,
\Vert\mb{y}-\bs\mu\Vert_{\bs\Sigma^{\unaryminus 1}}^2\right]\\
& = & \displaystyle\exp\left[
-\frac{1}{2}\,\Vert\mb{y}\Vert_{\bs\Sigma^{\unaryminus 1}}^2
+\langle\bs\mu,\mb{y}\rangle_{\bs\Sigma^{\unaryminus 1}}
-\left(\frac{1}{2}\,\Vert\bs\mu\Vert_{\bs\Sigma^{\unaryminus 1}}^2
+\log\sqrt{(2\pi)^n|\bs\Sigma|}\right)\right]\,.
\end{array}
\end{equation}
If we assume $\bs\Sigma$ to be known, the only parameters are $\bs{\theta}=\bs\mu.$ By defining
$$
C(\mb{y})=-\frac{1}{2}\,\Vert\mb{y}\Vert_{\bs\Sigma^{\unaryminus 1}}^2\,,\qquad
F_i(\mb{y})=(\bs\Sigma^{-1}\mb{y})_i\,,\qquad
\psi(\bs{\theta})=\frac{1}{2}\,\Vert\bs\theta\Vert_{\bs\Sigma^{\unaryminus 1}}^2
+\log\sqrt{(2\pi)^n|\bs\Sigma|}\,,
$$
the p.d.f. \eqref{eq:MND} can be rewritten in the canonical form given in Definition \ref{def:expfam}, where one can identify $\mathcal{X}=\Theta=\RR^n$.
\begin{definition}
An $(n,m)$ curved exponential family on $\mathcal{X}$ is a set $M$ of probability density functions which forms a smooth $m$--dimensional submanifold within an $n$--dimensional exponential family $S.$
\end{definition}

Now, we can state the main result of this Section.
\begin{theorem}\label{th:curvedexpfam}
The restriction of the statistical model $\bs{\hat\tau_2}(\x)$ on each subset $\Omega$ and $\Omega_i,\ i=0,1,2,$ is a $(2,2)$ curved exponential family on $\RR^2,$ parameterized by $\bs\theta(\x)=\bs{\tau_2}(\x).$
\end{theorem}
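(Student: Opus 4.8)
The strategy is to realize each restricted model as an open subset of the fixed-covariance Gaussian family on $\RR^2$ and to note that such an open subset is automatically a smooth $2$-dimensional submanifold, hence a $(2,2)$ curved exponential family in the sense of the definition preceding the theorem.

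First I would pin down the ambient exponential family. Specializing \eqref{eq:MND} to $n=2$ and $\bs\Sigma=\bs{\Sigma_2}$, the collection $S$ of densities $p(\cdot\,;\bs\mu,\bs{\Sigma_2})$ with $\bs\mu\in\RR^2$ is a $2$-dimensional exponential family in the canonical form of Definition \ref{def:expfam}, with statistics $F_i(\mb{y})=(\bs{\Sigma_2}^{-1}\mb{y})_i$ and natural parameter $\bs\theta=\bs\mu$ serving as a global coordinate system on $S$. By the reduced model \eqref{eq:TDOAstatmodrid}, each source $\x$ is sent to the density $p(\cdot\,;\bs{\tau_2}(\x),\bs{\Sigma_2})\in S$, so restricting $\x$ to $\Omega$ (respectively $\Omega_i$) produces a subset $M\subseteq S$ whose image under the global chart $\bs\mu$ is exactly $\bs{\tau_2}(\Omega)$ (respectively $\bs{\tau_2}(\Omega_i)$).

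Because $\bs\mu$ is a global chart on $S$, a subset of $S$ is a smooth submanifold if and only if its image in $\RR^2$ is, so it suffices to show that $\bs{\tau_2}$ restricted to $\Omega$ (respectively $\Omega_i$) is a smooth embedding onto an open subset of $\RR^2$. I would check the three standard ingredients. For smoothness, each component $\tau_{ji}=d_j-d_i$ is $C^\infty$ wherever $\x\neq\m{i}$, and the receivers $\m{i}$ lie on the degeneracy locus $D$, hence outside the open sets $\Omega,\Omega_i$. Injectivity is exactly the content of Proposition \ref{prop:restmod}(2). For the immersion condition, the Jacobian determinant of $\bs{\tau_2}$ vanishes precisely on $D$ by \eqref{eq:degloc}, while Proposition \ref{prop:restmod}(1) gives $\Omega,\Omega_i\subseteq\RR^2\setminus D$, so the Jacobian is nonsingular throughout each region. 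A smooth injective immersion between open subsets of $\RR^2$ is a local diffeomorphism by the inverse function theorem and, being injective, a diffeomorphism onto its image, which Proposition \ref{prop:restmod}(2) identifies with the open set $U$ (respectively $U_i$).

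Assembling these facts, $M$ is an open subset of the $2$-dimensional exponential family $S$ and hence a smooth $2$-dimensional submanifold of it, which is by definition a $(2,2)$ curved exponential family; the source location supplies the stated parameterization $\bs\theta(\x)=\bs{\tau_2}(\x)$. I do not expect a genuine obstacle, since all the substantive geometry — injectivity, the explicit images $U$ and $U_i$, and the location of the rank-drop set — is already recorded in Proposition \ref{prop:restmod} and \eqref{eq:degloc}. The only point requiring care is the immersion condition, which reduces to the observation that $\Omega$ and each $\Omega_i$ avoid $D$, the zero locus of the Jacobian.
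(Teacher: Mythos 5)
Your proposal is correct and follows essentially the same route as the paper: both realize the restricted models inside the ambient two-dimensional Gaussian exponential family $S$ with fixed covariance $\bs{\Sigma_2}$ and mean parameter $\bs\theta=\bs{\tau_2}(\x)$, and both invoke Proposition \ref{prop:restmod} to conclude that $\bs{\tau_2}$ restricted to $\Omega$ or $\Omega_i$ is a differentiable bijection onto the open set $U$ or $U_i$, hence defines a smooth $2$-dimensional submanifold of $S$. The only difference is that you spell out the immersion condition (nonvanishing of the Jacobian off the degeneracy locus $D$) that the paper leaves implicit in its citation of Proposition \ref{prop:restmod}.
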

\begin{proof}
Let us take as $S$ the $2$--dimensional exponential family on $\mathcal{X}=\Theta=\RR^2$ given by \begin{equation}\label{eq:pdfS}
p(\bs{\hat\tau};\bs{\tau_2},\bs{\Sigma_2}) =
\frac{1}{2\pi\sqrt{|\bs{\Sigma_2}|}}\,
\exp\left[-\frac{1}{2}\,
\Vert\bs{\hat\tau}-\bs{\tau_2}\Vert_{\bs{\Sigma_2}^{\unaryminus 1}}^2\right]\,,
\end{equation}
where the parameter space is the $\tau$--plane and $\bs\theta=\bs{\tau_2}.$ By Proposition \eqref{prop:restmod}, on each subset $\Omega$ and $\Omega_i,\ i=0,1,2,$ the map $\bs{\tau_2}(\x)$ is a differentiable bijection and defines a $2$--dimensional family $M$ of distributions $p(\bs{\hat\tau};\bs{\tau_2}(\x),\bs{\Sigma_2})$ which is smoothly embedded in $S.$
\end{proof}
\noindent From now on, we will refer to the curved exponential family defined by $\bs{\hat\tau_2}|_\Omega$ and $\bs{\hat\tau_2}|_{\Omega_i},\ i=0,1,2,$ as $M$ and $M_i,$ respectively, which have the open subsets $U$ and $U_i$ of $\Theta=\RR^2$ as parameter spaces (see Figure \ref{fig:modelli}).
As said at the beginning of the Section, in the rest of the manuscript we will investigate the properties of the TDOA statistical model through the analysis of the families $M,M_0,M_1,M_2.$

\section{The estimation of the source position in the TDOA space}\label{sc:sourceest}
In this Section we consider the problem of the estimation of the source position in the presence of noisy TDOA measurements. Now that we have a precise description of the feasible set of TDOA measurements, we can address the source localization problem in a radically different fashion with respect to the existing literature. The typical approach for estimating the source position, in fact, is based on the optimization of a cost function $f(\x;\bs{\hat{\tau},\Sigma})$ in the $x$--space, and the most well-known example of the sort is the Maximum Likelihood Estimation. In our minimal sensors scenario, if we choose $\bs{\hat{\tau}}\in\RR^2$, the MLE approach consists of maximizing of the likelihood function
$$
l(\bs{\tau_2}(\x);\bs{\hat\tau},\bs{\Sigma_2})=
\frac{1}{2\pi\sqrt{|\bs\Sigma_2|}}\,
\exp\left[-\frac{1}{2}\,
\Vert\bs{\hat\tau}-\bs{\tau_2}(\x))\Vert_{\bs\Sigma_2^{\unaryminus 1}}^2\right]\,,
$$
which returns the location
\begin{equation}\label{eq:MLE}
\mb{\bar{x}}(\bs{\hat\tau};\bs{\Sigma_2})=\underset{\x\in\RR^2}{\text{argmax}}\
l(\bs{\tau_2}(\x);\bs{\hat\tau},\bs{\Sigma_2})=
\underset{\x\in\RR^2}{\text{argmin}}\
\Vert\bs{\hat\tau}-\bs{\tau_2}(\x))\Vert_{\bs{\Sigma_2}^{\unaryminus 1}}^2\;.
\end{equation}
The MLE is an optimal estimator, as its variance asymptotically attains the Cramer--Rao lower bound. However, finding the solution of the optimization problem \eqref{eq:MLE} is a challenging task. The cost function, in fact, is strongly nonlinear, which makes it quite difficult to find a closed-form implementation. We must therefore resort to iterative techniques, which start from a random location and follow a gradient descent. However, as the cost function is not convex, the solution could get easily trapped in a local minimum. In order to reduce the occurrance and the impact of spurious localizations, tracking algorithms are typically used \cite{Antonacci2007}, but for many real-time applications the implementation could result too cumbersome and computationally intensive.

A relevant advantage of our approach to TDOA--based localization via the TDOA--space is that it gives new information and a better control over the optimization problem \eqref{eq:MLE}. In this section we show this fact in our minimal case of three sensors.

\subsection{The Maximum Likelihood Estimation in the TDOA space}\label{sc:MLE}
Let us consider, for the sake of simplicity, the model $M$. Given $\bs{\hat\tau}\in\RR^2,$ the Maximum Likelihood Estimation of the TDOAs is $\bs{\bar\tau}(\bs{\hat\tau};\bs{\Sigma_2})=\bs{\tau_2}(\bs{\bar\x}(\bs{\hat\tau};\bs{\Sigma_2})),$ which satisfies
\begin{equation}\label{eq:MLE-TDOAspace}
\bs{\bar\tau}(\bs{\hat\tau};\bs\Sigma_2) = \underset{\bt\in U}{\text{argmin}}\
\Vert\bs{\hat\tau}-\bt\Vert_{\bs{\Sigma_2}^{\unaryminus 1}}^2\;.
\end{equation}
This means that, in the $\tau$--plane, an MLE algorithm searches for the point $\bs{\bar\tau}(\bs{\hat\tau};\bs{\Sigma_2})\in U$ at minimum Mahalanobis distance from $\bs{\hat\tau}.$
We have two main cases. If $\bs{\hat\tau}\in U$, the MLE solution is simply $\bs{\bar\tau}(\bs{\hat\tau};\bs{\Sigma_2})=\bs{\hat\tau}$. On the other hand, if $\bs{\hat\tau}\notin U$, we have to find the closest point to $\bs{\hat\tau}$ on the boundary $\partial U$ of $U$.  In Figure \ref{fig:modelli}, the dashed lines subdivide the $\tau$--plane in several subsets, according to the different types of solution of the MLE for $\bs{\Sigma_2}=\sigma^2\bs I.$ In $\RR^2\setminus U$ there are:
\begin{itemize}
\item
six regions having the segments of $P_2\cap\partial U$ as boundaries;
\item
three regions having the arcs of $E\cap \partial U$ as boundaries;
\item
three angular regions with vertices $R^i,\ i=0,1,2.$
\end{itemize}
In the first two subcases, the MLE solution $\bs{\bar\tau}(\bs{\hat\tau};\bs{\Sigma_2})$ is the closest orthogonal projection of $\bs{\hat\tau}$ on the relative boundary of $U.$ In the latter, the MLE solution is the corresponding vertex $R^i.$

Similar arguments apply also to the models $M_i,\ i=0,1,2$ and for a generic covariance matrix $\bs{\Sigma_2}.$ In particular, for every model it is necessary to compute the projections of $\bs{\hat\tau}$ on $P_2\cap\bar{U}$ and on the ellipse. The rest of this Section will be focused on the geometric problem of projecting $\bs{\bar\tau}$ on the set of feasible measurements. We leave to Section \ref{sec:validation} the formulation and validation of the corresponding MLE algorithm.

\subsection{The orthogonal projections on the line segments in $P_2\cap\partial U$}\label{sc:projls}
For a generic point $\bs{\hat\tau}\in\RR^2,$ there exists a projection $\mathcal{P}_i^\pm(\bs{\hat\tau};\bs{\Sigma_2})$ on each of the lines $s_i^\pm,\ i=0,1,2,$ supporting the six facets of $P_2$ (see Figure \ref{fig:Ellisse-Politopo}).\footnote{In order to simplify the exposition, herein we adopt a different notation for the facets of $P_2$, with respect to the one used in \cite{Compagnoni2013a}. Indeed, we name $s_i^\pm$ the lines supporting two facets containing the vertex $R^i,$ while in \cite{Compagnoni2013a} we used $F_i^\pm$ for the two parallel facets not containing $R^i.$}
\begin{figure}[htb]
\begin{center}
\resizebox{6.5cm}{!}{
  \includegraphics[bb=191 253 420 538]{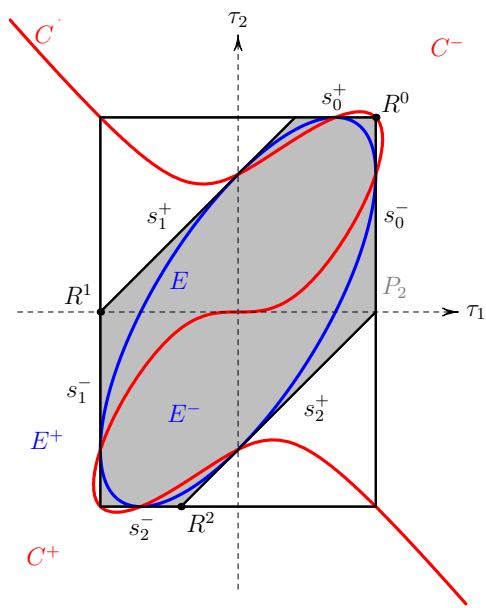}}\\
  \caption{The ellipse $E$ in blue, the cubic $C$ in red and the six lines supporting the facets of $P_2$. Regions $E^\pm$ and $C^\pm$ are named according to the sign of $a(\bt)$ and $b(\bt)$ respectively. We note that $U_0,U_1,U_2\subset C^+.$}
  \label{fig:Ellisse-Politopo}
\end{center}
\end{figure}
\noindent Let us define the vectors $\mb{v_0}=(1,1)^T,\ \mb{v_1}=(1,0)^T\ \text{and}\ \mb{v_2}=(0,1)^T,$ each one parallel to two facets of $P_2.$
Hence, the points $\mathcal{P}_i^\pm(\bs{\hat\tau};\bs{\Sigma_2}),\ i=0,1,2,$ are:
\begin{equation}\label{eq:projfacets}
\begin{array}{lcl}
\mathcal{P}_0^+(\bs{\hat\tau};\bs{\Sigma_2})=R^0+
\displaystyle\frac{\langle\bs{\hat\tau}-R^0,\mb{v_1}\rangle_{\bs{\Sigma_2}^{\unaryminus 1}}}{\Vert\mb{v_1}\Vert^2_{\bs{\Sigma_2}^{\unaryminus 1}}}\;\mb{v_1},
& \quad &
\mathcal{P}_0^-(\bs{\hat\tau};\bs{\Sigma_2})=R^0+
\displaystyle\frac{\langle\bs{\hat\tau}-R^0,\mb{v_2}\rangle_{\bs{\Sigma_2}^{\unaryminus 1}}}{\Vert\mb{v_2}\Vert^2_{\bs{\Sigma_2}^{\unaryminus 1}}}\;\mb{v_2},\\[4mm]
\mathcal{P}_1^+(\bs{\hat\tau};\bs{\Sigma_2})=R^1+
\displaystyle\frac{\langle\bs{\hat\tau}-R^1,\mb{v_0}\rangle_{\bs{\Sigma_2}^{\unaryminus 1}}}{\Vert\mb{v_0}\Vert^2_{\bs{\Sigma_2}^{\unaryminus 1}}}\;\mb{v_0},
& &
\mathcal{P}_1^-(\bs{\hat\tau};\bs{\Sigma_2})=R^1+
\displaystyle\frac{\langle\bs{\hat\tau}-R^1,\mb{v_2}\rangle_{\bs{\Sigma_2}^{\unaryminus 1}}}{\Vert\mb{v_2}\Vert^2_{\bs{\Sigma_2}^{\unaryminus 1}}}\;\mb{v_2},\\[4mm]
\mathcal{P}_2^+(\bs{\hat\tau};\bs{\Sigma_2})=R^2+
\displaystyle\frac{\langle\bs{\hat\tau}-R^2,\mb{v_0}\rangle_{\bs{\Sigma_2}^{\unaryminus 1}}}{\Vert\mb{v_0}\Vert^2_{\bs{\Sigma_2}^{\unaryminus 1}}}\;\mb{v_0},
& &
\mathcal{P}_2^-(\bs{\hat\tau};\bs{\Sigma_2})=R^2+
\displaystyle\frac{\langle\bs{\hat\tau}-R^2,\mb{v_1}\rangle_{\bs{\Sigma_2}^{\unaryminus 1}}}{\Vert\mb{v_1}\Vert^2_{\bs{\Sigma_2}^{\unaryminus 1}}}\;\mb{v_1}.
\end{array}
\end{equation}
We need to know which of these projections are in $\partial U$. To this purpose, we first check if $\mathcal{P}_i^\pm(\bs{\hat\tau};\bs{\Sigma_2})$ lies on $P_2$ by verifying inequalities \eqref{eq:politopo}. If so, it holds $\mathcal{P}_i^\pm(\bs{\hat\tau};\bs{\Sigma_2})\in\partial U$ if and only if $l_i(\mathcal{P}_i^\pm(\bs{\hat\tau};\bs{\Sigma_2}))\geq 0,$
where $l_i(\bt)$ are the polynomials defining the lines $L_i,\ i=0,1,2$ (see equations \eqref{eq:lines}).

\subsection{The orthogonal projections on the arcs of the ellipse}\label{sc:projaE}
The projections $\mathcal{P}_E^i(\hat\bt;\bs{\Sigma_2}),\ i=1,\dots,k$ of the data point $\bs{\hat\tau}\in\RR^2$ on the ellipse $E$ are the stationary points of the squared Mahalanobis distance $\Vert\bs{\hat\tau}-\bt\Vert^2_{\bs{\Sigma_2}^{\unaryminus 1}}$ restricted to $E.$ 
Let $\nu$ be a Lagrange multiplier and consider the Lagrange function
$$
\Lambda(\bt,\nu;\bs{\hat\tau},\bs\Sigma_2)=\Vert\bs{\hat{\tau}}-\bt\Vert^2_{\bs\Sigma_2^{\unaryminus 1}}+\nu\,a(\bt)\, .
$$
Consequently $\mathcal{P}_E^i(\bs{\hat\tau};\bs\Sigma_2),\ i=1,\dots,k$ are the real stationary points of $\Lambda(\bt,\nu;\bs{\hat\tau},\bs\Sigma_2)$, i.e. the real solutions of the system
\begin{equation}\label{eq:MDequation}
\left\{\begin{array}{l}
\nabla_{\bt}\Lambda(\bt,\nu;\bs{\hat\tau},\bs\Sigma_2)=0\\
a(\bt)=0
\end{array}\right..
\end{equation}
This is a system of polynomial equations that can be solved through symbolic or numerical computation. In the first case, by using elimination theory (see \cite{clos}), we can reduce system \eqref{eq:MDequation} to a triangular polynomial system. In particular, we obtain a degree-$4$ equation in one variable, which admits a closed-form expression. The solutions of the system can then be derived through back--substitution. In doing so, however, we must be careful about issues of numerical stability of the solutions. From a numerical standpoint, in order to solve system \ref{eq:MDequation} we can use some software based on homotopy continuation (e.g. PHCpack \cite{Verschelde1999} or Bertini \cite{Bates2013}). In the following paragraph, we propose an alternate approach based on the parametric description of $E$ via trigonometric functions.

Recalling that $\alpha=\widehat{\mb{d_{10}}\mb{d_{20}}}$, we can state the following:

\begin{proposition}
Assume $\varphi\in[0,2\pi).$ Then
$\bt(\varphi)=(d_{10}\,\sin\varphi,d_{20}\,\sin(\varphi+\alpha))$
is a $1$--to--$1$ regular parametrization of $E.$
\end{proposition}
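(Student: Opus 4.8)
The plan is to verify directly that the proposed trigonometric curve $\bt(\varphi)$ satisfies the defining equation $a(\bt)=0$ of the ellipse $E$, and then to establish that the parametrization is regular (nowhere-vanishing derivative) and injective on $[0,2\pi)$. First I would recall from the definitions in Section \ref{sc:TDOAspace} the explicit form of the polynomial $a(\bt)=\Vert\mb{v}(\bt)\Vert^2-\ast(\mb{d_{10}}\wedge\mb{d_{20}})^2$, where $\mb{v}(\bt)=\ast(\tau_{20}\mb{d_{10}}-\tau_{10}\mb{d_{20}})$. Substituting $\tau_{10}=d_{10}\sin\varphi$ and $\tau_{20}=d_{20}\sin(\varphi+\alpha)$, the vector inside the Hodge operator becomes $d_{10}d_{20}\bigl(\sin(\varphi+\alpha)\,\mb{d_{10}}-\sin\varphi\,\mb{d_{20}}\bigr)$. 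Since $\ast$ is a linear isometry on $\wedge^1 V$, computing $\Vert\mb{v}(\bt)\Vert^2$ reduces to expanding $\Vert\sin(\varphi+\alpha)\mb{d_{10}}-\sin\varphi\,\mb{d_{20}}\Vert^2$ using $\langle\mb{d_{10}},\mb{d_{20}}\rangle=d_{10}d_{20}\cos\alpha$ (by definition of the angle $\alpha=\widehat{\mb{d_{10}}\mb{d_{20}}}$) and $\ast(\mb{d_{10}}\wedge\mb{d_{20}})=d_{10}d_{20}\sin\alpha$.

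The core computation is then a trigonometric identity. Expanding $\Vert\mb{v}(\bt)\Vert^2$ yields $d_{10}^2d_{20}^2\bigl(\sin^2(\varphi+\alpha)+\sin^2\varphi-2\sin(\varphi+\alpha)\sin\varphi\cos\alpha\bigr)$, and I expect this bracket to collapse to exactly $\sin^2\alpha$ after applying the product-to-sum and angle-addition formulas; this matches $\ast(\mb{d_{10}}\wedge\mb{d_{20}})^2=d_{10}^2d_{20}^2\sin^2\alpha$, so that $a(\bt(\varphi))\equiv 0$. I would verify this identity cleanly by writing $\sin(\varphi+\alpha)=\sin\varphi\cos\alpha+\cos\varphi\sin\alpha$ and simplifying; the cross terms cancel and the surviving $\cos^2\varphi\sin^2\alpha+\sin^2\varphi\sin^2\alpha=\sin^2\alpha$ emerges. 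This establishes $\text{Im}(\bt(\cdot))\subseteq E$.

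For the regularity claim I would compute $\bt'(\varphi)=(d_{10}\cos\varphi,\,d_{20}\cos(\varphi+\alpha))$ and argue it never vanishes: since $d_{10},d_{20}>0$ and the receivers are not collinear so that $\sin\alpha\neq 0$ (i.e. $\alpha\notin\{0,\pi\}$), the two phases $\varphi$ and $\varphi+\alpha$ cannot both give vanishing cosine simultaneously, hence $\bt'(\varphi)\neq\mb{0}$ for all $\varphi$. For injectivity, the map $\varphi\mapsto\bt(\varphi)$ traces $E$ once: one may observe that $\bt(\varphi+\pi)=-\bt(\varphi)$ corresponds to the antipodal point on the centrally-symmetric conic, and that within a half-period the pair $(\sin\varphi,\sin(\varphi+\alpha))$ is determined uniquely by $\varphi$ because $\sin\alpha\neq 0$ lets one recover $\cos\varphi$ from the two coordinates. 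Since $E$ is a smooth closed curve and $\bt$ is a regular loop of the right topological degree, the map is a bijection onto $E$ for $\varphi\in[0,2\pi)$.

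The main obstacle is the trigonometric identity in the second step: it must reproduce $\sin^2\alpha$ exactly, and any sign convention for $\ast$ or for the orientation $\ast(\mb{d_{10}}\wedge\mb{d_{20}})>0$ must be tracked carefully so that the squared quantities match. The regularity and injectivity arguments are then essentially routine once $\sin\alpha\neq 0$ is invoked, which is guaranteed by the standing assumption that the three receivers are not aligned.
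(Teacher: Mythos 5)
Your proof is correct, but it runs in the opposite direction from the paper's and is genuinely different in structure. You verify by direct substitution that the curve lands in $E$ (via the identity $\sin^2(\varphi+\alpha)-2\sin(\varphi+\alpha)\sin\varphi\cos\alpha+\sin^2\varphi=\sin^2\alpha$), then prove regularity and injectivity by elementary trigonometry, and obtain surjectivity from a topological argument. The paper instead \emph{constructs} the parametrization starting from an arbitrary $\bt\in E$: the equation $a(\bt)=0$ says that the auxiliary vector $\mb{u}(\bt)=\tau_{20}\mb{d_{10}}-\tau_{10}\mb{d_{20}}$ has the constant norm $\Vert\mb{d_{10}}\wedge\mb{d_{20}}\Vert$, hence admits a unique angular representation in the orthogonal basis $\{\tilde{\mb{d}}_{\mb{10}},\ast\tilde{\mb{d}}_{\mb{10}}\}$; inverting the nonsingular linear relation between $\bt$ and $\mb{u}(\bt)$ and using the exterior--algebra identities together with the orientation convention $\ast(\mb{d_{10}}\wedge\mb{d_{20}})>0$ yields exactly $\tau_{10}=d_{10}\sin\varphi$, $\tau_{20}=d_{20}\sin(\varphi+\alpha)$. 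That construction delivers existence \emph{and} uniqueness of $\varphi$ for every point of $E$ in one stroke, so bijectivity needs no separate step. What your route buys: it is more elementary (no Hodge manipulations beyond the isometry property) and it explicitly checks regularity of the parametrization, which the paper leaves implicit. What it costs: surjectivity is the one place where your argument is only gestured at --- ``regular loop of the right topological degree'' is asserted, not proved. You can close this either topologically (the image of $S^1$ under your continuous injective map is compact, hence closed in $E$, and open in $E$ by injectivity plus regularity, so by connectedness it is all of $E$) or, more in the spirit of your own injectivity step, algebraically: given $\bt\in E$, set $s=\tau_{10}/d_{10}$ and $c=\bigl(\tau_{20}/d_{20}-s\cos\alpha\bigr)/\sin\alpha$, and check that $a(\bt)=0$ is precisely the condition $s^2+c^2=1$, so an angle $\varphi$ with $\sin\varphi=s$, $\cos\varphi=c$ exists and $\bt=\bt(\varphi)$. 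One small notational slip: the vector inside the Hodge operator is $d_{20}\sin(\varphi+\alpha)\,\mb{d_{10}}-d_{10}\sin\varphi\,\mb{d_{20}}$, i.e. $d_{10}d_{20}\bigl(\sin(\varphi+\alpha)\,\tilde{\mb{d}}_{\mb{10}}-\sin\varphi\,\tilde{\mb{d}}_{\mb{20}}\bigr)$ with \emph{unit} vectors; the expansion you then write down is the correct one.
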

\begin{proof}
As the Cartesian equation of $E$ is $a(\bt)=0$, for any given $\bt\in E$ the vector $\mb{u}(\bt)=\tau_{20}\mb{d_{10}}-\tau_{10}\mb{d_{20}}$ satisfies
$$
\Vert\mb{u}(\bt)\Vert^2=
\Vert\tau_{20}\mb{d_{10}}-\tau_{10}\mb{d_{20}}\Vert^2= \Vert\mb{d_{10}}\wedge\mb{d_{20}}\Vert^2.
$$
In the $x$--plane, the vectors $\ast\mb{d_{10}},\ast\mb{d_{20}}$ are perpendicular to $\mb{d_{10}},\mb{d_{20}},$ respectively, therefore $\{\mb{d_{10}},\ast\mb{d_{10}}\}$ and $\{\mb{d_{20}},\ast\mb{d_{20}}\}$ are both orthogonal bases of $\RR^2.$
This implies that, for any $\bt\in E,$ there exists a unique angle $\varphi\in[0,2\pi)$ such that
\begin{equation}\label{eq:uphi}
\mb{u}(\bt)=-\ast(\mb{d_{10}}\wedge\mb{d_{20}})\;
(\tilde{\mb{d}}_{\mb{10}}\cos\varphi+\ast\tilde{\mb{d}}_{\mb{10}}\sin\varphi)\,,
\end{equation}
where $\tilde{\mb{d}}_{\mb{10}},\ast\tilde{\mb{d}}_{\mb{10}}$ are unit vectors. From the definition of $\mb{u}(\bt),$ we have
$$
\tau_{10}=-\frac{\langle\mb{u}(\bt),\ast\mb{d_{10}}\rangle}{\langle\mb{d_{20}},\ast\mb{d_{10}}\rangle}\,,\qquad\qquad
\tau_{20}=\frac{\langle\mb{u}(\bt),\ast\mb{d_{20}}\rangle}{\langle\mb{d_{10}},\ast\mb{d_{20}}\rangle}\,.
$$
By substituting \eqref{eq:uphi} in the above formulas, we obtain the following trigonometric parametrization of $E$:
\begin{equation*}
\left\{
\begin{array}{l}
\tau_{10}(\varphi)=\displaystyle-\frac{\ast(\mb{d_{10}}\wedge\mb{d_{20}})}{\langle\mb{d_{20}},\ast\mb{d_{10}}\rangle}\,
\left(\langle\tilde{\mb{d}}_{\mb{10}},\ast\mb{d_{10}}\rangle\cos\varphi+
\langle\ast\tilde{\mb{d}}_{\mb{10}},\ast\mb{d_{10}}\rangle\sin\varphi\right)\\[5mm]
\tau_{20}(\varphi)=\displaystyle\frac{\ast(\mb{d_{10}}\wedge\mb{d_{20}})}{\langle\mb{d_{10}},\ast\mb{d_{20}}\rangle}\,
\left(\langle\tilde{\mb{d}}_{\mb{10}},\ast\mb{d_{20}}\rangle\cos\varphi+
\langle\ast\tilde{\mb{d}}_{\mb{10}},\ast\mb{d_{20}}\rangle\sin\varphi\right)
\end{array}
\right.\,.
\end{equation*}
In the Euclidean plane, the identities $\langle\ast\mb{u},\ast\mb{v}\rangle=\langle\mb{u},\mb{v}\rangle$ and $\langle\mb{u},\ast\mb{v}\rangle=\ast(\mb{u}\wedge\mb{v})$ hold for any $\mb{u},\mb{v}\in\RR^2$ (see Appendix A of \cite{Compagnoni2013a}), therefore
$$
\tau_{10}(\varphi)=d_{10}\sin\varphi
$$
and
$$
\tau_{20}(\varphi)=
\ast(\tilde{\mb{d}}_{\mb{10}}\wedge\mb{d_{20}})\cos\varphi+
\langle\tilde{\mb{d}}_{\mb{10}},\mb{d_{20}}\rangle\sin\varphi=
d_{20}(\sin\alpha\cos\varphi+\cos\alpha\sin\varphi)=
d_{20}\sin(\varphi+\alpha)\,,
$$
where in the second equality we used the assumption $\ast(\mb{d_{10}}\wedge\mb{d_{20}})>0.$
\end{proof}
\noindent By substituting the above parametrization of $E$ into $\Vert\bs{\hat\tau}-\bt\Vert^2_{\bs\Sigma_2^{\unaryminus 1}}$ and differentiating with respect to $\varphi$, we obtain the following trigonometric equation in $\varphi\in [0,2\pi):$
\begin{equation}\label{eq:projell}
\left<\bs{\hat\tau}-\bt(\varphi)\,,\,(d_{10}\cos\varphi,d_{20}\cos(\varphi+\alpha))
\right>_{\bs\Sigma_2^{\unaryminus 1}}=0.
\end{equation}
For example, in the simplest case with $\bs\Sigma_2=\sigma^2 \mb{I},$ the equation is
\begin{equation*}\label{eq:ProjEllipse}
d_{10}\cos\left(\tilde\varphi-\frac{\alpha}{2}\right)
\left(\hat\tau_{10}-d_{10}\sin\left(\tilde\varphi-\frac{\alpha}{2}\right)\right)+
d_{20}\cos\left(\tilde\varphi+\frac{\alpha}{2}\right)
\left(\hat\tau_{20}-d_{20}\sin\left(\tilde\varphi+\frac{\alpha}{2}\right)\right)=0,
\end{equation*}
where we used the more symmetric variable $\tilde\varphi=\varphi+\frac{\alpha}{2}\in[0,2\pi).$
For any fixed setting of the sensors and the matrix $\bs\Sigma_2$, and for any TDOA measurements $\bs{\hat\tau}\in\RR^2$, solutions $\bar\varphi_i,\ i=1,\dots,k$ of equation \eqref{eq:projell} can be obtained through standard numerical algorithms (e.g. bisection or Newton--Rapson methods). Then, the relative orthogonal projections of $\bs{\hat\tau}$ on $E$ are $\mathcal{P}_E^i(\bs{\hat\tau};\bs\Sigma_2)=\bt(\varphi_i),\ i=1,\dots,k.$

Irrespective of the chosen resolution method, for any projection we have to finally check if $\mathcal{P}_E^i(\bs{\hat\tau};\bs\Sigma_2)$ lies on the right arcs of ellipse for the given model. In particular, if we are considering model $M,$ then we require $\mathcal{P}_E^i(\bs{\hat\tau};\bs\Sigma_2)\in\partial U\cap E,\ i=1,\dots,k.$ We showed in \cite{Compagnoni2013a} that 
\begin{equation}\label{eq:bordoU_i}
\partial(U_0\cup U_1\cup U_2)\cap E
\setminus\{T_0^\pm,T_1^\pm,T_2^\pm\} = C^+\cap E
\end{equation}
(see Figure \ref{fig:Ellisse-Politopo}). Since the set $\partial U\cap E$ is the complement in $E$ of \eqref{eq:bordoU_i}, we have
\begin{equation}\label{eq:bordoU}
\partial U\cap E = (C^-\cup C)\cap E.
\end{equation}
By definition, $\mathcal{P}_E^i(\bs{\hat\tau};\bs{\Sigma_2})\in E,\ i=1,\dots,k,$ thus $\mathcal{P}_E^i(\bs{\hat\tau};\bs\Sigma_2)\in \partial U\cap E$ if, and only if, $b(\mathcal{P}_E^i(\bs{\hat\tau};\bs\Sigma_2))\leq 0.$ On the other hand, if we are considering the models $M_j,\ j=0,1,2,$ then we have to check if $\mathcal{P}_E^i(\bs{\hat\tau};\bs\Sigma_2)\in\partial U_j\cap E,\ i=1,\dots,k.$ This holds if and only if inequality $l_j(\mathcal{P}_E^i(\bs{\hat\tau};\bs\Sigma_2))\geq 0$ is satisfied.

We conclude this subsection by discussing on the number $k$ of solutions of system \eqref{eq:MDequation}. For the case $\bs\Sigma_2=\sigma^2\bs I,$ this problem is known in the algebraic geometry literature (see \cite{Draisma2013,Friedland2014}) as the computation of the Euclidean distance degree of a variety (the ellipse $E$ in the present case). We remark that the knowledge of $k$ is crucial for the correct functioning of any numerical algorithm used for solving system \eqref{eq:MDequation}.

\begin{definition}\label{def:MDdegre}
The Mahalanobis distance degree (MDdegree) of the ellipse $E$ is the number of complex stationary points of the Lagrangian $\Lambda(\bt,\nu;\bs{\hat\tau};\bs{\Sigma_2})$
for a general $\hat\bt\in\CC^2.$ The real Mahalanobis Degree (rMD) of $E$ is the integer valued function $\kappa_{\bs\Sigma_2}:\RR^2\rightarrow\mathbb{N}$ that for any $\bs{\hat\tau}\in\RR^2$ gives the number of distinct real stationary points of $\Lambda(\bt,\nu;\bs{\hat\tau};\bs{\Sigma_2}).$ Finally, the Mahalanobis degree discriminant (MDdiscriminant) of $E$ is the locus $\mathcal{E}_{\bs\Sigma_2}\subset\CC^2$ of points $\bs{\hat\tau}\in\RR^2$ such that $\Lambda(\bt,\nu;\bs{\hat\tau};\bs{\Sigma_2})$ has at least two coinciding stationary points.
\end{definition}
\noindent The interested reader can find in \cite{Draisma2013,Friedland2014} the definition of the distance degree and discriminant for any given variety and the proofs of the following results.
As it is implicit in its definition, MDdegree does not depend on $\bs{\hat\tau}$ and $\bs\Sigma_2$ and for every ellipse it is equal to $4.$ The MDdiscriminant $\mathcal{E}_{\bs\Sigma_2}$ is an astroid, i.e. an algebraic singular curve of degree six whose real part is of the type drawn in Figure \ref{fig:Evoluta}.
\begin{figure}[htb]
\begin{center}
\resizebox{11cm}{!}{
  \includegraphics[bb=92 263 519 528]{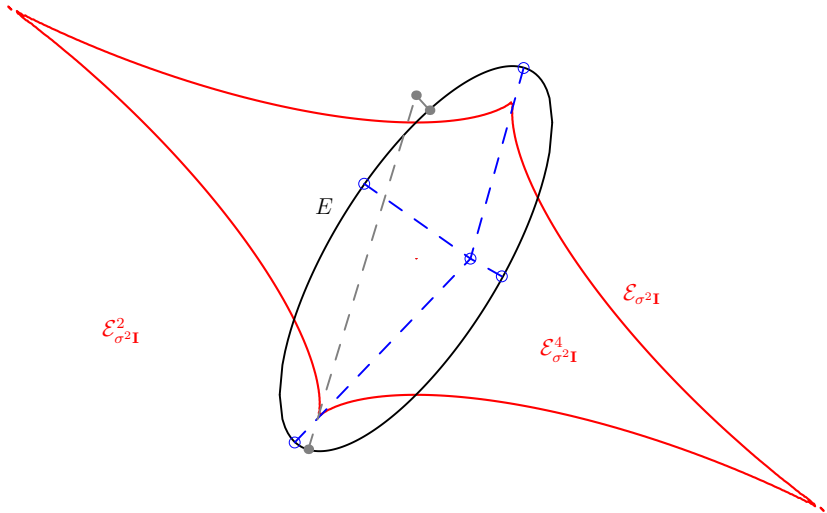}}\\
  \caption{The MDdiscriminant $\mathcal{E}_{\bs\Sigma_2}$ of the ellipse $E$ when $\bs\Sigma_2=\sigma^2\mb{I}$ and the regions $\mathcal{E}^2_{\bs\Sigma_2},\mathcal{E}^4_{\bs\Sigma_2}$, for the configuration of the receiver $\m{0}=(0,0)^T,\ \m{1}=(2,0)^T$ and $\m{2}=(2,2)^T$. If $\bs{\hat{\tau}}\in\mathcal{E}^4_{\bs\Sigma_2}$ there are $4$ projections on $E$, if $\bs{\hat{\tau}}\in\mathcal{E}^2_{\bs\Sigma_2}$ there are $2$ projections and finally for $\bs{\hat{\tau}}\in\mathcal{E}_{\bs\Sigma_2}$ there are $3$ distinct projections.}
  \label{fig:Evoluta}
\end{center}
\end{figure}

\noindent Although the shape of the astroid depends on the covariance matrix, its topological properties are invariant and in particular $\mathcal{E}_{\bs\Sigma_2}$ subdivides the $\tau$--plane into two connected and disjoint open regions: the exterior and the interior of $\mathcal{E}_{\bs\Sigma_2},$ that we name $\mathcal{E}_{\bs\Sigma_2}^2$ and $\mathcal{E}_{\bs\Sigma_2}^4,$ respectively. The function $\kappa_{\bs\Sigma_2}$ is constant on each of these regions and in particular we have:
\begin{equation}
\kappa_{\bs\Sigma_2}(\hat\bt)=\left\{
\begin{array}{lcl}
2 & \qquad\text{if, and only if,}\qquad & \bs{\hat\tau}\in\mathcal{E}_{\bs\Sigma_2}^2\\
3 & \text{if, and only if,} & \bs{\hat\tau}\in\mathcal{E}_{\bs\Sigma_2}\\
4 & \text{if, and only if,} & \bs{\hat\tau}\in\mathcal{E}_{\bs\Sigma_2}^4\\
\end{array}\right.\,.
\end{equation}
Notice that $\kappa_{\bs\Sigma_2}(\bs{\hat\tau})$ is exactly the number $k$ of orthogonal projections $\mathcal{P}_E^i(\bs{\hat\tau};\bs\Sigma_2)$ of the point $\bs{\hat\tau}\in\RR^2$ on the ellipse $E.$ In appendix \ref{app:evoluta} we include the source code (written in Singular \cite{DGPS}) for computing the Cartesian equation $F_{\bs\Sigma_2}(\bt)=0$ of $\mathcal{E}_{\bs\Sigma_2},$ once the sensors positions and the covariance matrix have been set. As an example, the polynomial defining the curve $\mathcal{E}_{\sigma^2\bs{I}}$ in Figure \ref{fig:Evoluta} is:
$$
F_{\sigma^2\bs{I}}(\bt)=
\tau_1^6+6\tau_1^5\tau_2+18\tau_1^4\tau_2^2+32\tau_1^3\tau_2^3+36\tau_1^2\tau_2^4+
24\tau_1\tau_2^5+8\tau_2^6+48\tau_1^4-24\tau_1^3\tau_2-
$$
$$
-588\tau_1^2\tau_2^2-696\tau_1\tau_2^3-132\tau_2^4+1200\tau_1^2+
2400\tau_1\tau_2+2400\tau_2^2-8000.
$$
As $\bs{0}=(0,0)^T\in\mathcal{E}_{\bs\Sigma_2}^4$, a point $\bs{\hat\tau}\in\RR^2$ lies on $\mathcal{E}_{\bs\Sigma_2}^4$ (respectively $\mathcal{E}_{\bs\Sigma_2}^2$) if, and only if, $F_{\bs\Sigma_2}(\bs{\hat\tau})F_{\bs\Sigma_2}(\bs{0})>0$ (respectively $F_{\bs\Sigma_2}(\bs{\hat\tau})F_{\bs\Sigma_2}(\bs{0})<0$).

\section{Asymptotic statistical inference}\label{sc:ASI}
A crucial point in parametric statistics is the evaluation of the accuracy of parameter estimation procedures. 
The precise description of the parametric models $M,M_0,M_1,M_2$ given in Section \ref{sc:StatMod} allows us to use the tools of Information Geometry \cite{Amari2000} for analyzing this aspect of TDOA--based localization. In Section \ref{sc:sourceest} we laid down the basis for the solution of MLE, that is the optimal estimation from the statistical point of view. Now we focus on the analysis of MLE efficiency. In Section \ref{sec:validation} we will explicitly implement the MLE and validate our theoretical analysis.

Notice that the analysis proposed in this manuscript is inherently local, based on the differential geometry properties of the restricted models. In particular, this means that here we do not take into account the ambiguities in the localization, which we described in Section \ref{sc:TDOAspace}. What we do is study an ideal situation where we know the region $\Omega,\Omega_1,\Omega_2,\Omega_3$ where the source lies, and therefore we know which TDOA model to consider. At the end of Section \ref{sec:validation}, we will go back to the more realistic scenario in which we have no a-priori knowledge on the source position.

We finally remark that the analytic evaluation of the accuracy of source localization is not a novel idea per se. See \cite{Ho2012,So2013} for other examples
that do not rely on Information Geometry. This manuscript, however, goes further in that investigation as it proposes in Subsection \ref{sc:assASI} a study on the reliability of the asymptotic analysis.

\subsection{Asymptotic mean square error and bias of MLE}
From Theorem \ref{th:curvedexpfam}, we know that each model $M,M_0,M_1,M_2$ is a $(2,2)$ curved exponential family. As $\bs{\Sigma_2}$ is known, we only have to specify the parameters $\bs{\theta}(\x)=\bs{\tau_2}(\x)$. It is well-known that for any statistical manifold, there exists a natural Riemannian metric, the so called \emph{Fisher metric}. If we use $E_{\x}[\cdot]$ to denote the expectation value with respect to the distribution $p(\bs{\hat{\tau}};\bs{\theta}(\x))$, then at any point $\bs{\theta}(\x)$ the metric is given by the Fisher information matrix
$$
\mb{G}(\x)=E_{\x}[\nabla_{\x}\ell(\bs{\hat{\tau}};\bs{\theta}(\x))^T\,\nabla_{\x}\ell(\bs{\hat{\tau}};\bs{\theta}(\x))]=
\int_{\RR^2}\nabla_{\x}\ell(\bs{\hat{\tau}};\bs{\theta}(\x))^T\,\nabla_{\x}\ell(\bs{\hat{\tau}};\bs{\theta}(\x))\,p(\bs{\hat{\tau}};\bs{\theta}(\x))\mb{d}\bs{\hat\tau},
$$
where $\nabla_{\x}\ell(\bs{\hat{\tau}};\bs{\theta}(\x))=(\partial_x\ell(\bs{\hat{\tau}};\bs{\theta}(\x)),\partial_y\ell(\bs{\hat{\tau}};\bs{\theta}(\x))).$ In the integral appears the function
$$
\ell(\bs{\hat{\tau}};\bs{\theta}(\x))=\log(p(\bs{\hat{\tau}};\bs{\theta}(\x)))=
C(\bs{\hat\tau})+\sum_{i=1}^n \theta_i(\x)F_i(\bs{\hat\tau})-\psi(\bs{\theta}(\x)),
$$
where
$$
C(\bs{\hat\tau})=-\frac{1}{2}\,\Vert\bs{\hat{\tau}}\Vert_{\bs{\Sigma_2}^{\unaryminus 1}}^2\,,\qquad
F_i(\bs{\hat{\tau}})=(\bs{\Sigma_2}^{-1}\bs{\hat{\tau}})_i\,,\qquad
\psi(\bs{\theta}(\x))=\frac{1}{2}\,\Vert\bs\theta(\x)\Vert_{\bs{\Sigma_2}^{\unaryminus 1}}^2
+\log\sqrt{(2\pi)^n|\bs{\Sigma_2}|}\,.
$$

In order to explicitly obtain the Fisher matrix, we first compute the Jacobian matrix of $\bs{\theta}(\x):$
$$
\mb{J}(\bs{\theta}(\x))=\left( \begin{array}{c}
\mb{\tilde{d}_1}(\x)-\mb{\tilde{d}_0}(\x)\\
\mb{\tilde{d}_2}(\x)-\mb{\tilde{d}_0}(\x)
\end{array} \right).
$$
After some straightforward computations we obtain
$$
\mb{G}(\x)=\mb{J}(\bs{\theta}(\x))^T\bs{\Sigma_2}^{-1}\mb{J}(\bs{\theta}(\x)).
$$
Let us also compute the Hessian matrices of the components of $\bs{\theta}(\x)$ with respect to the parameters $\x$:
$$
\mb{Hs}(\theta_{i}(\x))=\left( \begin{array}{cc}
\frac{\PM{\mb{\tilde{d}_i},\mb{e_2}}^2}{d_i} &
-\frac{\PM{\mb{\tilde{d}_i},\mb{e_1}}
\PM{\mb{\tilde{d}_i},\mb{e_2}}}{d_i}\\
-\frac{\PM{\mb{\tilde{d}_i},\mb{e_1}}
\PM{\mb{\tilde{d}_i},\mb{e_2}}}{d_i} &
\frac{\PM{\mb{\tilde{d}_i},\mb{e_1}}^2}{d_i}\\
\end{array} \right)-
\left( \begin{array}{cc}
\frac{\PM{\mb{\tilde{d}_0},\mb{e_2}}^2}{d_0} &
-\frac{\PM{\mb{\tilde{d}_0},\mb{e_1}}
\PM{\mb{\tilde{d}_0},\mb{e_2}}}{d_0}\\
-\frac{\PM{\mb{\tilde{d}_0},\mb{e_1}}
\PM{\mb{\tilde{d}_0},\mb{e_2}}}{d_0} &
\frac{\PM{\mb{\tilde{d}_0},\mb{e_1}}^2}{d_0}\\
\end{array} \right),
$$
for $i=1,2.$ Now, we can state the main result of the Section on MLE. As we saw in Section \ref{sc:MLE}, for any set of measurements $\bs{\hat\tau}$ and for each model we have an MLE estimate
$\bs{\bar{\tau}}(\bs{\hat{\tau}};\bs{\Sigma_2})$
and a corresponding source position $\mb{\bar{x}}(\bs{\hat{\tau}};\bs{\Sigma_2})=\bs{\tau_2}^{-1}(\bs{\bar{\tau}}(\bs{\hat{\tau}};\bs{\Sigma_2}))$.
\begin{proposition}\label{prop:asyan}
Given a source at $\x,$ the (local) asymptotic mean square error of $\mb{\bar{x}}(\bs{\hat{\tau}};\bs{\Sigma_2})$ is equal to
\begin{equation}\label{eq:AMSEMLE}
E_{\x}[(\mb{\bar{x}}(\bs{\hat{\tau}};\bs{\Sigma_2})-\mb{x})^T(\mb{\bar{x}}(\bs{\hat{\tau}};\bs{\Sigma_2})-\mb{x})]=\mb{G}(\x)^{-1}.
\end{equation}
The (local) first order bias of $\mb{\bar{x}}(\bs{\hat{\tau}};\bs{\Sigma_2})$ is
\begin{equation}\label{eq:ABMLE}
E_{\x}[\mb{\bar{x}}(\bs{\hat{\tau}};\bs{\Sigma_2})-\x]=-\frac{1}{2}\,\mb{b}(\x),
\end{equation}
where
\begin{equation}\label{eq:ABMLE2}
\mb{b}(\x)=
\left(\rm{Tr}(\mb{Hs}(\theta_1(\x))\mb{G}(\x)^{-1}),\
\rm{Tr}(\mb{Hs}(\theta_2(\x))\mb{G}(\x)^{-1})\right)
\cdot(\mb{J}(\bs{\theta}(\x))^{-1})^T.
\end{equation}
\end{proposition}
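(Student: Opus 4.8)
The plan is to exploit the fact, established in Proposition~\ref{prop:restmod} and Theorem~\ref{th:curvedexpfam}, that on each of $\Omega,\Omega_0,\Omega_1,\Omega_2$ the map $\bs{\tau_2}$ is a diffeomorphism onto an open subset ($U$ or $U_i$) of the ambient parameter space $\Theta=\RR^2$. Hence the relevant curved family is in fact a \emph{full}-dimensional nonlinear reparametrization of the Gaussian exponential family $S$ of \eqref{eq:pdfS}, and the only source of curvature is the nonlinearity of the coordinate change $\x\mapsto\bs{\theta}(\x)=\bs{\tau_2}(\x)$. Fix a source $\x$ with $\bs{\tau_2}(\x)$ in the interior of $U$ (resp. $U_i$). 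In the local/asymptotic regime the fluctuation $\bs{\epsilon_2}$ in \eqref{eq:TDOAstatmodrid} is small with overwhelming probability, so $\bs{\hat\tau}$ stays inside $U$ and, by Section~\ref{sc:MLE}, the MLE is simply the pull-back $\mb{\bar x}(\bs{\hat\tau};\bs{\Sigma_2})=\bs{\tau_2}^{-1}(\bs{\hat\tau})$. I would therefore reduce both claims to a Taylor (delta-method) expansion of the smooth inverse $g=\bs{\tau_2}^{-1}$ around $\bs{\tau_2}(\x)$, evaluated at $\bs{\hat\tau}=\bs{\tau_2}(\x)+\bs{\epsilon_2}$ with $\bs{\epsilon_2}\sim N(\mb{0},\bs{\Sigma_2})$.

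For the mean square error I would keep the first-order term $\mb{\bar x}-\x=\mb{J}(\bs{\theta}(\x))^{-1}\bs{\epsilon_2}+O(\Vert\bs{\epsilon_2}\Vert^2)$, where $Dg=\mb{J}(\bs{\theta}(\x))^{-1}$ is well defined because $\mb{J}$ is invertible off the degeneracy locus $D$ (Proposition~\ref{prop:restmod}). Taking the outer product and the expectation gives, to leading order, $E_{\x}[(\mb{\bar x}-\x)^T(\mb{\bar x}-\x)]=\mb{J}^{-1}\bs{\Sigma_2}(\mb{J}^{-1})^T=(\mb{J}^T\bs{\Sigma_2}^{-1}\mb{J})^{-1}=\mb{G}(\x)^{-1}$, which is exactly the Fisher matrix computed just before the statement; this also matches the Cramér--Rao bound, consistent with the first-order efficiency of the MLE for a curved exponential family \cite{Amari2000}, and proves \eqref{eq:AMSEMLE}.

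For the bias, since the first-order term has zero mean, the leading contribution is the quadratic one: componentwise $E_{\x}[\bar x^k-x^k]=\tfrac12\,E_{\x}[\bs{\epsilon_2}^T(D^2g)^k\bs{\epsilon_2}]=\tfrac12\,\mathrm{Tr}\big((D^2g)^k\bs{\Sigma_2}\big)$, where $(D^2g)^k$ is the Hessian of the $k$-th component of $g$. I would then substitute the standard identity for the Hessian of an inverse map, $(D^2g)^k_{ab}=-\sum_l(\mb{J}^{-1})_{kl}\sum_{i,j}[\mb{Hs}(\theta_l(\x))]_{ij}(\mb{J}^{-1})_{ai}(\mb{J}^{-1})_{bj}$, which re-expresses $D^2g$ through the forward Hessians $\mb{Hs}(\theta_l(\x))$ displayed above the statement. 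Carrying out the two contractions, the factor $\sum_{a,b}(\mb{J}^{-1})_{ai}(\bs{\Sigma_2})_{ab}(\mb{J}^{-1})_{bj}=(\mb{G}(\x)^{-1})_{ij}$ reassembles the inverse Fisher matrix, leaving $E_{\x}[\bar x^k-x^k]=-\tfrac12\sum_l(\mb{J}^{-1})_{kl}\,\mathrm{Tr}(\mb{Hs}(\theta_l(\x))\mb{G}(\x)^{-1})$, which is precisely $-\tfrac12\,\mb{b}(\x)$ with $\mb{b}(\x)$ as in \eqref{eq:ABMLE2}.

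The two tensor contractions are routine; the delicate point is justifying the local reduction that underlies the whole computation. Away from the interior, when $\bs{\hat\tau}\notin U$ the estimator is a projection onto $\partial U$ (Sections~\ref{sc:projls}--\ref{sc:projaE}) rather than the smooth pull-back, and the truncated Taylor expansion breaks down. The main work is therefore to show that, in the asymptotic regime, the event $\{\bs{\hat\tau}\notin U\}$ together with a neighbourhood of $\partial U$ carries only higher-order probability mass, so that truncating the expansion of $g$ captures the leading MSE and bias. This is exactly what the qualifier ``(local)'' in the statement is meant to encapsulate, and it is the step I would treat most carefully.
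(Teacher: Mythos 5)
Your computation is correct, but it follows a genuinely different route from the paper's. The paper proves the proposition by citation: the MSE statement is read off from Theorem 4.3 of \cite{Amari2000} on consistent estimators (MLE being consistent and asymptotically efficient, the general expression collapses to the Cramer--Rao bound $\mb{G}(\x)^{-1}$), and the bias statement is obtained by checking, ``as a matter of computation,'' that \eqref{eq:ABMLE2} coincides with Amari's formula 4.42 from higher-order asymptotic theory. You instead exploit what is special about these models: each is a $(2,2)$ family, a full-dimensional reparametrization of the ambient Gaussian family, so that locally the MLE is just the pull-back $\bs{\tau_2}^{-1}(\bs{\hat\tau})$ and everything reduces to a delta-method Taylor expansion with Gaussian moments. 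Your first-order term gives $\mb{J}^{-1}\bs{\Sigma_2}(\mb{J}^{-1})^T=(\mb{J}^T\bs{\Sigma_2}^{-1}\mb{J})^{-1}=\mb{G}(\x)^{-1}$, and your second-order term, contracted through the Hessian-of-an-inverse identity, reproduces $-\frac{1}{2}\,\mb{b}(\x)$ exactly as in \eqref{eq:ABMLE2} (your index placement $(\mb{J}^{-1})_{ai}$ in that identity carries a transpose slip, but the final contraction is the right one). The trade-off is this: the paper's citation remains valid for genuinely curved $(n,m)$ families with $m<n$---the situation with more than three receivers---where your reduction ``MLE equals inverse map'' is unavailable and embedding-curvature terms, silently handled by Amari's theorems, become relevant; your argument, on the other hand, is elementary and self-contained, and it is precisely the machinery the paper itself deploys in Subsection \ref{sc:assASI}, where the same expansion of $\bs{\tau_2}^{-1}$ produces the higher-order remainder \eqref{eq:remainder}, so your proof delivers those corrections essentially for free. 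Finally, the ``delicate point'' you flag---that $\{\bs{\hat\tau}\notin U\}$ together with a neighbourhood of $\partial U$ must carry only higher-order probability mass---is genuine and is exactly what the qualifier ``(local)'' conceals; it is not purely technical, since with positive probability $\bs{\bar\tau}(\bs{\hat\tau};\bs{\Sigma_2})$ lands on the elliptic part of $\partial U$, in which case $\mb{\bar{x}}$ is an ideal point at infinity and the unlocalized left-hand side of \eqref{eq:AMSEMLE} literally diverges. The paper's proof does not resolve this either, so on that step the two arguments are at the same level of rigor: both must be read as statements about the localized small-noise expansion.
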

\begin{proof}
The formula for the asymptotic mean square error of a consistent estimator is given in Theorem 4.3 of \cite{Amari2000}. Since MLE is consistent and asymptotically efficient, such formula reduces to \eqref{eq:AMSEMLE}.\footnote{Actually, this way we obtain the Cramer--Rao lower bound.} Moreover, by considering the higher order asymptotic theory, one can compute the expected value in \eqref{eq:ABMLE}. It is just a matter of computation to verify that \eqref{eq:ABMLE2} is equivalent to formula 4.42 in \cite{Amari2000}.
\end{proof}
\noindent In the case of $\bs{\Sigma_2}=\sigma^2\mb{I},$ we can explicitly compute \eqref{eq:AMSEMLE}:
$$
\mb{G}(\x)^{-1}=
\frac{\sigma^2}{\vert J(\bs{\eta}(\x))\vert^2}\;\sum_{i=1}^2
\left( \begin{array}{cc}
\PM{\mb{\tilde{d}_i}-\mb{\tilde{d}_0},\mb{e_2}}^2 &
-\PM{\mb{\tilde{d}_i}-\mb{\tilde{d}_0},\mb{e_1}}
\PM{\mb{\tilde{d}_i}-\mb{\tilde{d}_0},\mb{e_2}}\\
-\PM{\mb{\tilde{d}_i}-\mb{\tilde{d}_0},\mb{e_1}}
\PM{\mb{\tilde{d}_i}-\mb{\tilde{d}_0},\mb{e_2}} &
\PM{\mb{\tilde{d}_i}-\mb{\tilde{d}_0},\mb{e_1}}^2
\end{array} \right).
$$

The a priori knowledge of the bias is very interesting from the point of view of applications, because it allows us to define the bias-corrected MLE as
\begin{equation}
\label{eq:loc_bc}
\mb{\bar{x}}_{bc}(\bs{\hat{\tau}};\bs{\Sigma_2})=\mb{\bar{x}}(\bs{\hat{\tau}};\bs{\Sigma_2})+\frac{1}{2}\mb{b}(\x).
\end{equation}
However, an exact compensation of the bias involves the knowledge of the true source location $\x$, which is obviously unknown in real context. At best, one can compute the bias at the estimated source location, and so the bias-compensated estimate becomes
\begin{equation}
\label{eq:loc_bc2}
\mb{\bar{x}}_{bc}(\bs{\hat{\tau}};\bs{\Sigma_2})=\mb{\bar{x}}(\bs{\hat{\tau}};\bs{\Sigma_2})+\frac{1}{2}\mb{b}(\mb{\bar{x}}(\bs{\hat{\tau}};\bs{\Sigma_2})).
\end{equation}
An inexact knowledge of the source location introduces an error in the bias prediction. With some preliminary experiments we tested that \eqref{eq:loc_bc2} improves the estimate of the distance of the source from the reference sensor, while the variance of the direction of arrival increases. This suggests that an improved accuracy could be achieved by using the estimation of the distance and of the angle coming from bias-corrected and non-corrected estimations, respectively. However, we leave a deeper analysis of such a problem for future developments.

\subsection{Assessment of the asymptotic analysis}\label{sc:assASI}
The asymptotic error analysis described in the previous Section gives a correct evaluation of the error relative to the source position only under certain conditions. Indeed, it works well in the regions of the statistical manifold where the curvature is not too high (see Section 4.5 of \cite{Amari2000}). For each one of the four models $M,M_0,M_1,M_2,$ this remains true if we keep away from their boundaries.

In this Section we propose a method for evaluating the reliability of the asymptotic analysis. The  starting observation is that the asymptotic error analysis is essentially based on taking the first non-trivial orders in the Taylor expansions of the expectation values \eqref{eq:AMSEMLE} and \eqref{eq:ABMLE}, respectively. A first approach is to consider the rest of such approximations, for example, by taking the Lagrange remainders of the respective Taylor polynomials. However, preliminary tests indicate that, in doing so, we typically overestimate the errors caused by the low-order approximations.

More realistically, we can estimate the error in the asymptotic approximation by computing the next order in the series expansion of \eqref{eq:AMSEMLE} and \eqref{eq:ABMLE}. In order to do so the key identity is
$$
E_{\x}[f(\bs{\hat\tau})]=\int_{\RR^2}f(\bs{\hat{\tau}})\,p(\bs{\hat\tau};\bs{\theta}(\x))\,\mb{d}\bs{\hat\tau}=\left.\exp\left(\frac{1}{2}\nabla\,\bs{\Sigma_2}\nabla^T\right)f(\bs{\hat\tau})\right|_{\bs{\hat\tau}=\bs{\theta}(\x)},
$$
where the exponential differential operator corresponds to the power series
$$
\exp\left(\frac{1}{2}\nabla\,\bs{\Sigma_2}\nabla^T\right)=
\sum_{n=0}^\infty\frac{1}{2^n n!}\left(\sum_{i,j=1}^2\bs{\Sigma_2}_{ij}\,\frac{\partial}{\partial\hat\tau_i}\,\frac{\partial}{\partial\hat\tau_j}\right)^n.
$$
For the sake of simplicity, here we focus on the analysis of \eqref{eq:AMSEMLE} under the assumption $\bs{\Sigma_2}=\sigma^2\mb{I}.$ We have
$$
f(\bs{\hat\tau})=\left((\mb{\bar{x}}(\bs{\hat{\tau}};\bs{\Sigma_2})-\mb{x})^T(\mb{\bar{x}}(\bs{\hat{\tau}};\bs{\Sigma_2})-\mb{x})\right)_{ij},\qquad i,j=1,2.
$$
If $\bs{\hat\tau}$ lies on the interior of a given model, the MLE is simply the inverse map, i.e. $\mb{\bar{x}}(\bs{\hat{\tau}};\bs{\Sigma_2})=\bs{\tau_2}^{-1}(\bs{\hat\tau}).$ Therefore, it is a matter of computation to obtain the first order correction $\bs{\Delta}(\x)$ to the expectation value \eqref{eq:AMSEMLE}. In order to simplify the explanation, in the formula we use the following multi-index notation:
$$
D^{(p,q)}\bar{x}_i=\left(\left.\frac{\partial^{p+q}}{\partial\hat\tau_1^p\,\partial\hat\tau_2^q}\,
\bs{\tau_2}^{-1}(\bs{\hat\tau})\right|_{\bs{\hat\tau}=\bs{\theta}(\x)}\right)_i
\quad\text{for}\quad i=1,2.
$$
This way, we arrive to
\begin{equation}\label{eq:remainder}
\bs{\Delta}(\x)_{ij}=\left(E_{\x}[(\mb{\bar{x}}(\bs{\hat{\tau}};\bs{\Sigma_2})-\mb{x})^T(\mb{\bar{x}}(\bs{\hat{\tau}};\bs{\Sigma_2})-\mb{x})]-\mb{G}(\x)^{-1}\right)_{ij}=
\end{equation}
$$
\frac{\sigma^4}{4}\left(3\left(D^{(2,0)}\bar{x}_i D^{(2,0)}\bar{x}_j+
D^{(0,2)}\bar{x}_i D^{(0,2)}\bar{x}_j\right)
+D^{(2,0)}\bar{x}_iD^{(0,2)}\bar{x}_j+D^{(0,2)}\bar{x}_iD^{(2,0)}\bar{x}_j+
4D^{(1,1)}\bar{x}_iD^{(1,1)}\bar{x}_j+\right.
$$
$$
+2D^{(1,0)}\bar{x}_i\left(D^{(3,0)}\bar{x}_j+D^{(1,2)}\bar{x}_j\right)+
2D^{(0,1)}\bar{x}_i\left(D^{(0,3)}\bar{x}_j+D^{(2,1)}\bar{x}_j\right)+
$$
$$
\left.+2D^{(1,0)}\bar{x}_j\left(D^{(3,0)}\bar{x}_i+D^{(1,2)}\bar{x}_i\right)+
2D^{(0,1)}\bar{x}_j\left(D^{(0,3)}\bar{x}_i+D^{(2,1)}\bar{x}_i\right)\right)
+o(\sigma^5).
$$

In Section \ref{sec:validation} we will validate the formula through simulations and we will suggest how \eqref{eq:remainder} can be used for the evaluation of the accuracy of the asymptotic error analysis given in Proposition \ref{prop:asyan}.

\section{MLE algorithm implementation and validation}\label{sec:validation}

This Section implements the MLE localization technique and the asymptotic statistical analysis in Sections \ref{sc:sourceest} and \ref{sc:ASI}, respectively. In particular, we will validate them through a set of simulations.
Finally, we conclude by giving offering some comments for the reader who might be interested in adopting the described algorithm in a real scenario.

\subsection{The solutions of the MLE}\label{sec:MLEfinale}

As discussed in Section \ref{sc:sourceest}, we have a different MLE for models $M$ and $M_i,~i=1,2,3$.
In all these cases we are interested in computing the estimate $\mb{\bar{x}}(\bs{\hat\tau};\bs{\Sigma_2})$ from the data vector $\bs{\hat\tau}$. We begin with the model $M$, therefore we assume that $\x\in\Omega.$
\begin{algorithm}[H]
\caption{MLE algorithm for model $M$}
\label{alg:MLE_M}
\begin{algorithmic}[1]
\REQUIRE{TDOA measurements $\bs{\hat\tau}\in\RR^2,$ covariance matrix $\bs{\Sigma_2}$}
\STATE{Check if $\bs{\hat\tau}$ lies in $U$ by verifying the (strict) inequalities \eqref{eq:politopo} and \eqref{eq:U}. If so $\bs{\bar\tau}(\bs{\hat\tau};\bs{\Sigma_2})=\bs{\hat\tau}$ and go to Line  \ref{line5}.}
\STATE{Compute the orthogonal projections $\mathcal{P}_i^\pm(\bs{\hat\tau};\bs{\Sigma_2}),\ i=0,1,2,$ using formulas \eqref{eq:projfacets} and $\mathcal{P}_E^j(\bs{\hat\tau};\bs{\Sigma_2}),\ j=1,\dots,k,$ by solving system \eqref{eq:MDequation}.}
\STATE{Evaluate which of the projections lie on $\partial U.$ For $\mathcal{P}_i^\pm(\bs{\hat\tau};\bs{\Sigma_2}),\ i=0,1,2,$ one must check inequalities \eqref{eq:politopo} and $l_i(\mathcal{P}_i^\pm(\bs{\hat\tau};\bs{\Sigma_2}))\geq 0.$ For $\mathcal{P}_E^j(\bs{\hat\tau};\bs{\Sigma_2}),\ j=1,\dots,k,$ one must verify $b(\mathcal{P}_E^j(\bs{\hat\tau};\bs{\Sigma_2}))\leq 0.$}
\STATE{For every projection on $\partial U$ and for each vertex $R^i,\ i=0,1,2,$ computer the Mahalanobis distance from $\bs{\hat\tau}.$ The MLE solution $\bs{\bar\tau}(\bs{\hat\tau};\bs{\Sigma_2})$ corresponds to the point where such distance is minimum.}
\STATE\label{line5}{The solution of the MLE in the $x$--plane is $\mb{\bar{x}}(\bs{\hat\tau};\bs{\Sigma_2})=\x_+(\bs{\bar\tau}(\bs{\hat\tau};\bs{\Sigma_2})).$}
\RETURN $\mb{\bar{x}}(\bs{\hat\tau};\bs{\Sigma_2})$
\end{algorithmic}
\end{algorithm}
We must pay attention to the interpretation of the results regarding the source position. Indeed, by taking the closest point to $\bs{\hat\tau}$ on $\partial U,$ we are actually considering the {\em compactification}
of the model $M.$ We have the following cases.
\begin{itemize}
\item
If $\bs{\bar\tau}(\bs{\hat\tau};\bs{\Sigma_2})$ is one of the $\mathcal{P}_i^\pm(\bs{\hat\tau};\bs{\Sigma_2}),\ i=0,1,2,$ then $\mb{\bar{x}}(\bs{\hat\tau};\bs{\Sigma_2})$ lies on the degeneracy locus $D$.
\item
If $\bs{\bar\tau}(\bs{\hat\tau};\bs{\Sigma_2})$ is one of the $\mathcal{P}_E^j(\bs{\hat\tau};\bs{\Sigma_2}),\ j=1,\dots,k,$ then we have to take the extension of the inverse map $\mb{x}_+(\bt)$ with value in the projective plane, because $\lambda_+(\bt)$ is not defined on $\partial U\cap E.$ It follows that $\mb{\bar{x}}(\bs{\hat\tau};\bs{\Sigma_2})$ is the ideal point with homogeneous coordinates $(\mb{v}(\mathcal{P}_E^j(\bs{\hat\tau};\bs{\Sigma_2})):0)\in\PP_\RR^2$. In this case, the vector $\mb{v}(\mathcal{P}_E^j(\bs{\hat\tau};\bs{\Sigma_2}))$ should be interpreted as the localization direction of a very far source, in a situation where even a very small noise on the TDOA measurements hinders to
estimate the distance of the source from the sensors.
\item
If $\bs{\bar\tau}(\bs{\hat\tau};\bs{\Sigma_2})=R^i,$ then $\mb{\bar{x}}(\bs{\hat\tau};\bs{\Sigma_2})=\m{i},\ i=0,1,2.$
\end{itemize}

A similar MLE algorithm can be defined for each model $M_i,\ i=0,1,2$. In these cases, we are assuming that $\x\in\Omega_i.$
\begin{algorithm}[H]
\caption{MLE algorithm for model $M_i$}
\label{alg:MLE_Mi}
\begin{algorithmic}[1]
\setalglineno{1}
\REQUIRE{TDOA measurements $\bs{\hat\tau}\in\RR^2,$ covariance matrix $\bs{\Sigma_2}$}
\STATE{Check if $\hat\bt$ lies on $U_i$ by verifying the (strict) inequalities \eqref{eq:politopo} and \eqref{eq:Ui}. If so $\bs{\bar\bt_i}(\bs{\hat\tau};\bs{\Sigma_2})=\bs{\hat\tau}$ and go to Line \ref{line5i}.
}
\STATE{Compute the orthogonal projections $\mathcal{P}_i^\pm(\bs{\hat\tau};\bs{\Sigma_2})$ using formulas \eqref{eq:projfacets} and $\mathcal{P}_E^j(\bs{\hat\tau};\bs{\Sigma_2}),\ j=1,\dots,k,$ by  solving system \eqref{eq:MDequation}.}
\STATE{Evaluate which of the projections lie on $\partial U_i.$ For $\mathcal{P}_i^\pm(\bs{\hat\tau};\bs{\Sigma_2})$ one must check inequalities \eqref{eq:politopo} and $l_i(\mathcal{P}_i^\pm(\bs{\hat\tau};\bs{\Sigma_2}))\geq 0.$ For $\mathcal{P}_E^j(\bs{\hat\tau};\bs{\Sigma_2}),\ j=1,\dots,k,$ one must verify $l_i(\mathcal{P}_E^j(\bs{\hat\tau};\bs{\Sigma_2}))\geq 0.$}
\STATE{For every projection on $\partial U_i$ and for the vertex $R^i$ calculate the Mahalanobis distance from $\bs{\hat\tau}.$ The MLE solution $\bs{\bar\tau_i}(\bs{\hat\tau};\bs{\Sigma_2})$ corresponds to the point where such distance is minimum.}
\STATE\label{line5i}{The solution of the MLE in the $x$--plane is $\mb{\bar{x}_i}(\bs{\hat\tau};\bs{\Sigma_2})=\x_-(\bs{\bar\tau_i}(\bs{\hat\tau};\bs{\Sigma_2})).$}
\RETURN $\mb{\bar{x_i}}(\bs{\hat\tau};\bs{\Sigma_2})$
\end{algorithmic}
\end{algorithm}
Similar remarks to those offered for the model $M$ hold true in this case as well. We finally recommend to be careful about the numerical stability of the solutions of the quadratic equation $a(\bt)\lambda_{\pm}(\bt)^2+2b(\bt)\lambda_{\pm}(\bt)+c(\bt)=0$ contained in Section \ref{sc:TDOAspace}. This can become an issue especially when $\bt$ is close to the ellipse $E$ and so $a(\bt)\cong 0$. In this case the formula \eqref{eq:lambda} is ill conditioned. See for example \cite{higham2002} as a reference book on this topic.

\subsection{Simulative results and comparison with Asymptotical Statistical Inference}\label{sc:simulations}
In this subsection we show the experimental results about source localization based on Algorithms \ref{alg:MLE_M} and \ref{alg:MLE_Mi}. We evaluate them through asymptotic analysis and Monte Carlo simulations.
\subsubsection{Setup}
The sensors are deployed as in Figure \ref{fig:spazio-x}, i.e. $\mathbf{m}_0 = (0,0)^T$, $\mathbf{m}_1=(2,0)^T$ and $\mathbf{m}_2=(2,2)^T$, and the reference sensor is $\mathbf{m}_0$.
The zero-mean noise added to the measurements has a standard deviation of $\sigma=0.005~\textrm{m}$. Notice that both root mean square error and bias are proportional to the noise variance, and therefore results do not lose generality due to the choice of a specific value of $\sigma$.
Sources were placed on a regular grid centered around the center of gravity of the sensors. In particular, the $x$ and $y$ coordinates range from $-2.67~\textrm{m}$ to $5.33~\textrm{m}$ and from $-3.33~\textrm{m}$ to $5.33~\textrm{m}$, respectively, resulting in a total number of $4225$ test source locations.
For each source location, 500 Monte Carlo simulations of Algorithms \ref{alg:MLE_M} and \ref{alg:MLE_Mi} have been performed. Sample estimates $\bs{\hat\Sigma}(\x)$ and $\mb{\hat{B}}(\x)$ of the covariance matrix and the bias, respectively, are then computed.
In order to discriminate the distance and angular error components on the source location, the projections of $\bs{\hat\Sigma}(\x)$ and $\mb{\hat{B}}(\x)$ on the eigenvectors of the matrix $\mb{G}^{-1}(\mathbf{x})$ are computed. Indeed, for sufficiently distant sources, the eigenvector related to the largest eigenvalue approximately coincides with the direction of the source, as seen from the array (in the following called radial direction). Due to the dynamic range of the error along the radial component, we adopt a logarithmic transformation of the component of the covariance matrix (predicted or estimated) along the radial direction.
\subsubsection{Root Mean Square Error}
Figure \ref{fig:theoretical_rmse} shows the mean square error on localization predicted by $\mb{G}^{-1}(\mathbf{x})$ (first row) and the simulated one $\bs{\hat\Sigma}(\x)$ (second row), for the component along the radial direction (a) and the orthogonal one (b). Notice that the asymptotic prediction is quite accurate over the considered region, except for the areas surrounding the degeneracy locus \eqref{eq:degloc}, i.e. the half-lines $r_i^{\pm},~i=0,1,2$ prolongations of the segments joining the sensor locations.
\begin{figure}[htb]
\begin{center}
  \resizebox{15cm}{!}{
  \includegraphics*[bb=0 0 1217 592]{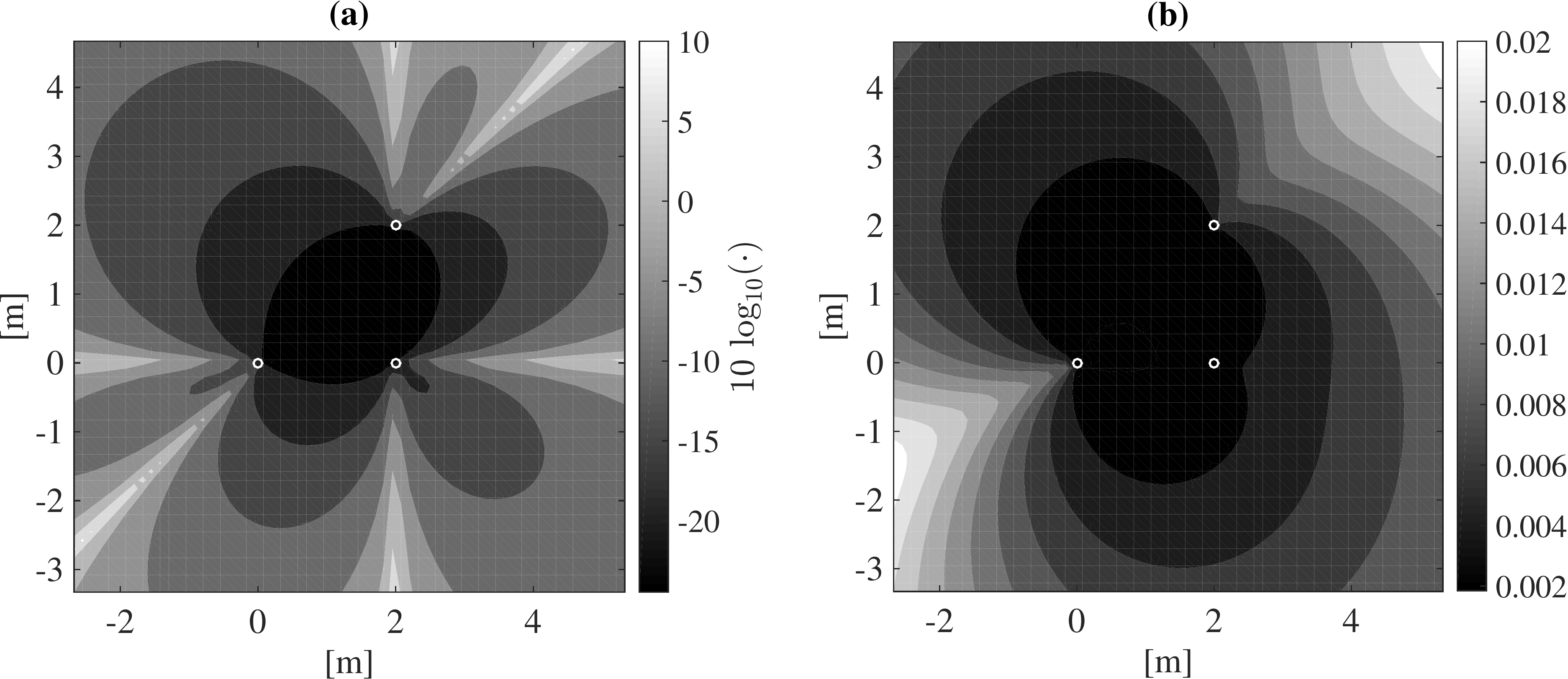}}
\\[5mm]
  \resizebox{15cm}{!}{
  \includegraphics*[bb=0 0 1217 592]{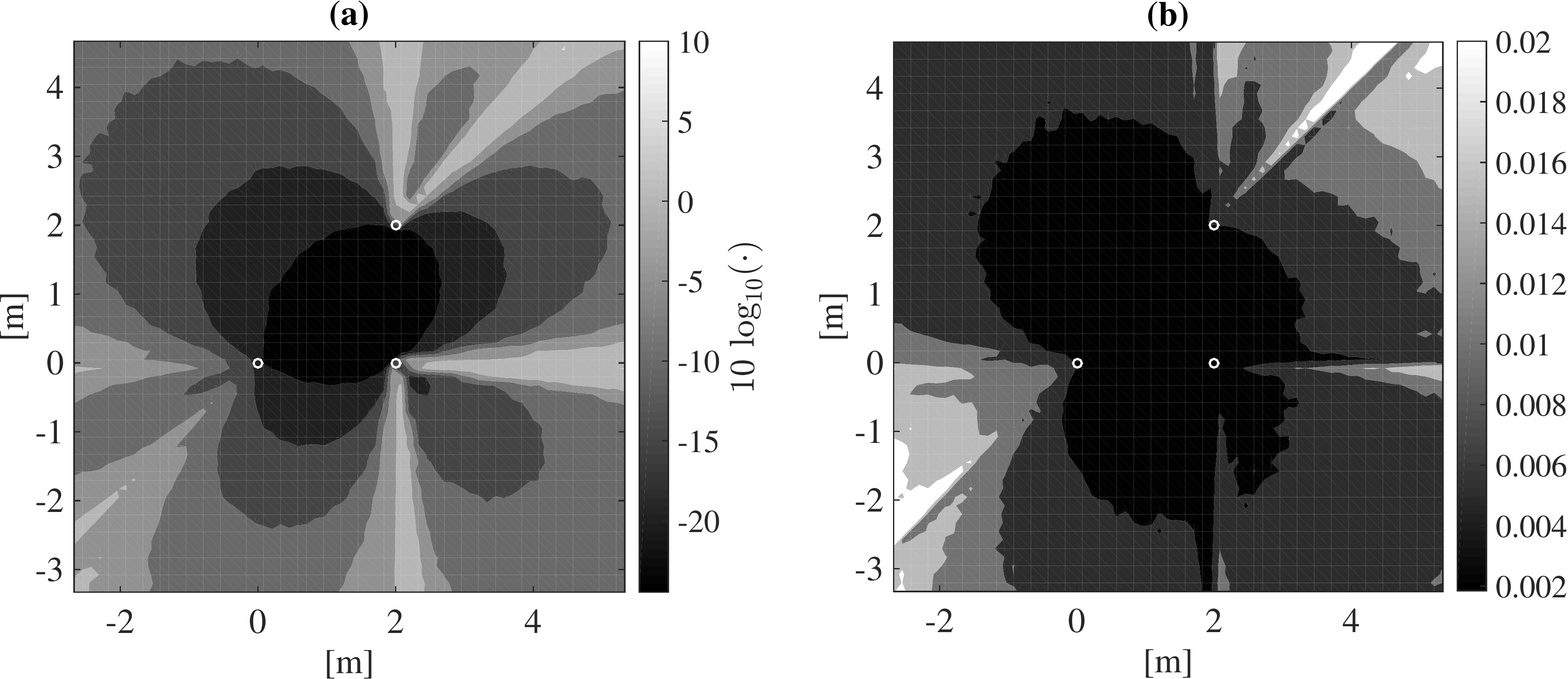}}
\caption{\label{fig:theoretical_rmse}Projection of $\mb{G}^{-1}(\mathbf{x})$ (first row) and  $\bs{\hat\Sigma}(\x)$ (second row) along the eigenvectors of $\mb{G}^{-1}(\mathbf{x})$. Projections along the eigenvectors related to the maximum (a) and minimum (b) eigenvalue of $\mb{G}^{-1}(\mathbf{x})$ are shown. Results are expressed on a logarithmic scale (a) and in meters (b).}
\end{center}
\end{figure}

With the aim of assessing the accuracy of the asymptotical estimate, in Figure \ref{fig:rem_analysis} we compare the remainder $\bs{\Delta}(\x)$ computed according to Equation \eqref{eq:remainder} and the difference $\bs{\hat\Delta}(\x)=\bs{\hat\Sigma}(\x)-\mb{G}^{-1}(\mathbf{x})$ between simulated and predicted RMSE. The solid lines are level curves of $\bs{\Delta}(\x),$ while the colormap represents $\bs{\hat\Delta}(\x).$ We can observe a pretty good match between the two. Such comparison suggests that we can use $\bs{\Delta}(\x)$ as a reliability certificate for the asymptotic prediction of RMSE given by $\mb{G}^{-1}(\mathbf{x})$. For example, we could define the trusted region of $\mb{G}^{-1}(\mathbf{x})$ as the one where $\bs{\Delta}(\x)$ takes value below a suitable threshold.

The availability of a method to predict the RMSE is important in applications where different accuracy is required in different regions. However, a quantitative estimate of $\bs{\Delta}(\x)-\bs{\hat\Delta}(\x)$ in the most general case is beyond the scope of the manuscript and needs further investigation.
\begin{figure}[htb]
\begin{center}
\resizebox{15cm}{!}{
  \includegraphics*[bb=0 0 1217 592]{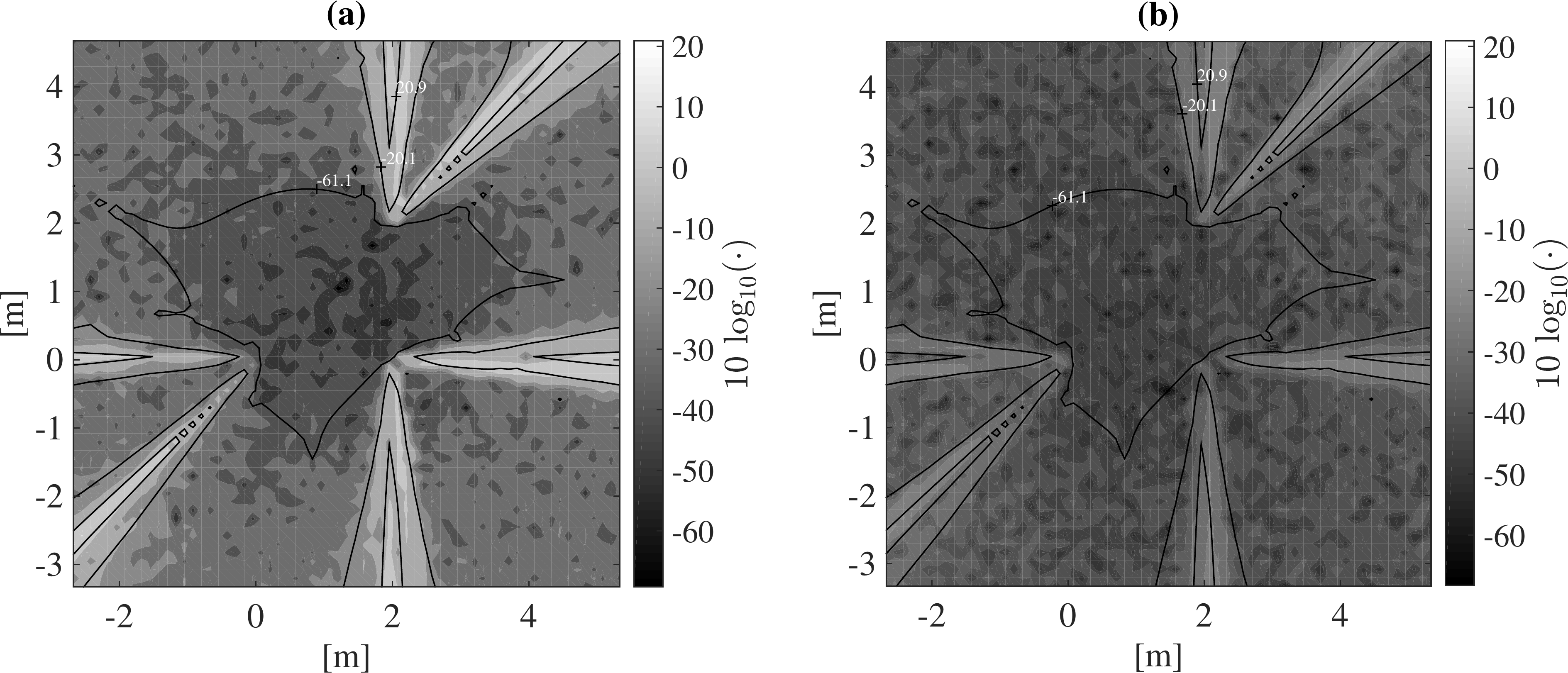}}
  \caption{\label{fig:rem_analysis}Superposition of the contour lines of $\bs{\Delta}(\x)$ (solid lines), and $\bs{\hat\Delta}(\x)$ (contour with colormap). We plot their projections along the eigenvectors related to the maximum (a) and minimum (b) eigenvalue of $\mb{G}^{-1}(\mathbf{x}),$ respectively. Results are expressed on a logarithmic scale.}
\end{center}
\end{figure}

Similarly to RMSE, Figure \ref{fig:bias_analysis} illustrates the bias analysis.
The first and second rows show asymptotic prediction $-\frac{1}{2}\mb{b}(\x)$ and the sample estimate $\mb{\hat B}(\x)$, respectively. As usual, we project them along the maximum (a) and minimum (b) eigenvectors of $\mb{G}^{-1}(\mathbf{x})$. We observe again a good match between simulations and prediction, except for the areas where the remainder analysis suggests a relevant error. It is worth noticing that at the points that are far from these areas the bias exhibits very small magnitude, which makes it difficult to estimate its value from simulations. For this reason, the contour lines of expected and simulated values take on different shapes. 
\begin{figure}[htb]
\begin{center}
\subfigure{\resizebox{15cm}{!}{
\includegraphics*[bb=0 0 1217 592]{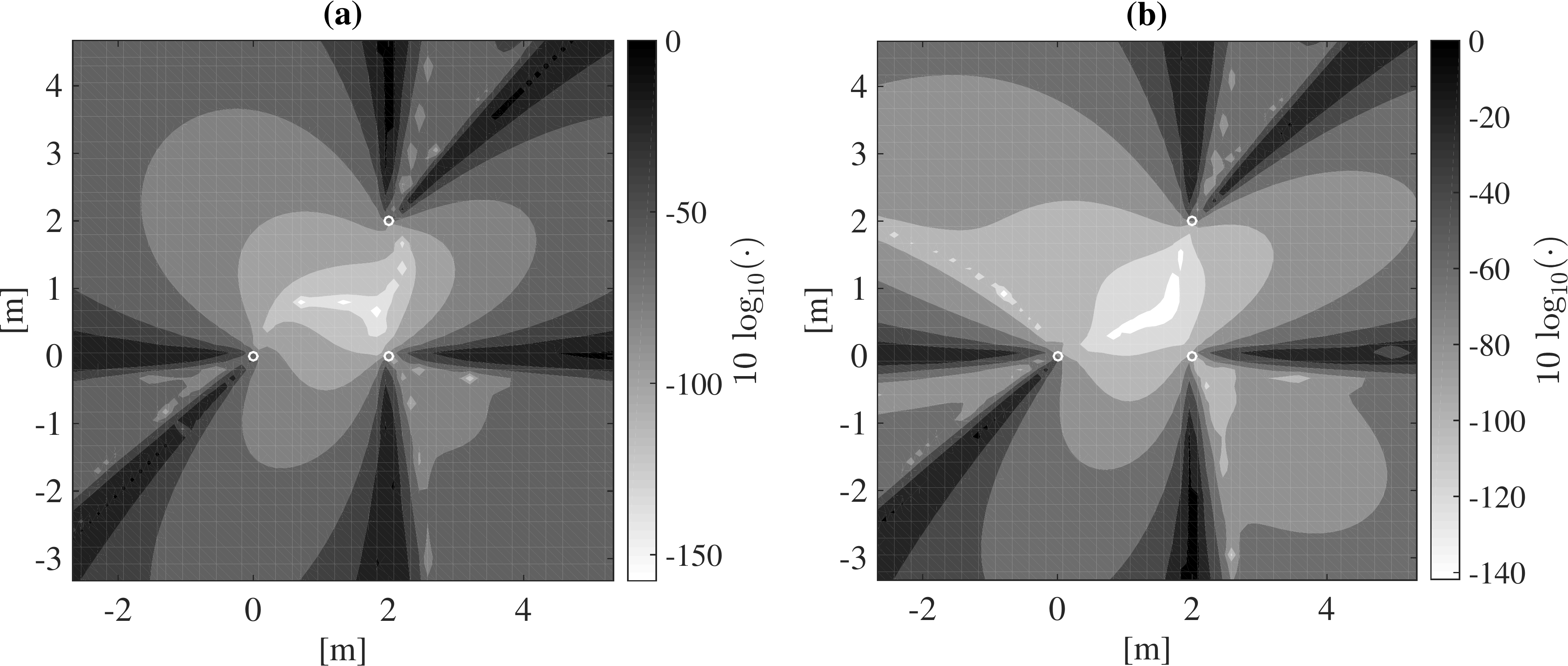}}}\\[5mm]
\subfigure{\resizebox{15cm}{!}{
\includegraphics*[bb=0 0 1217 592]{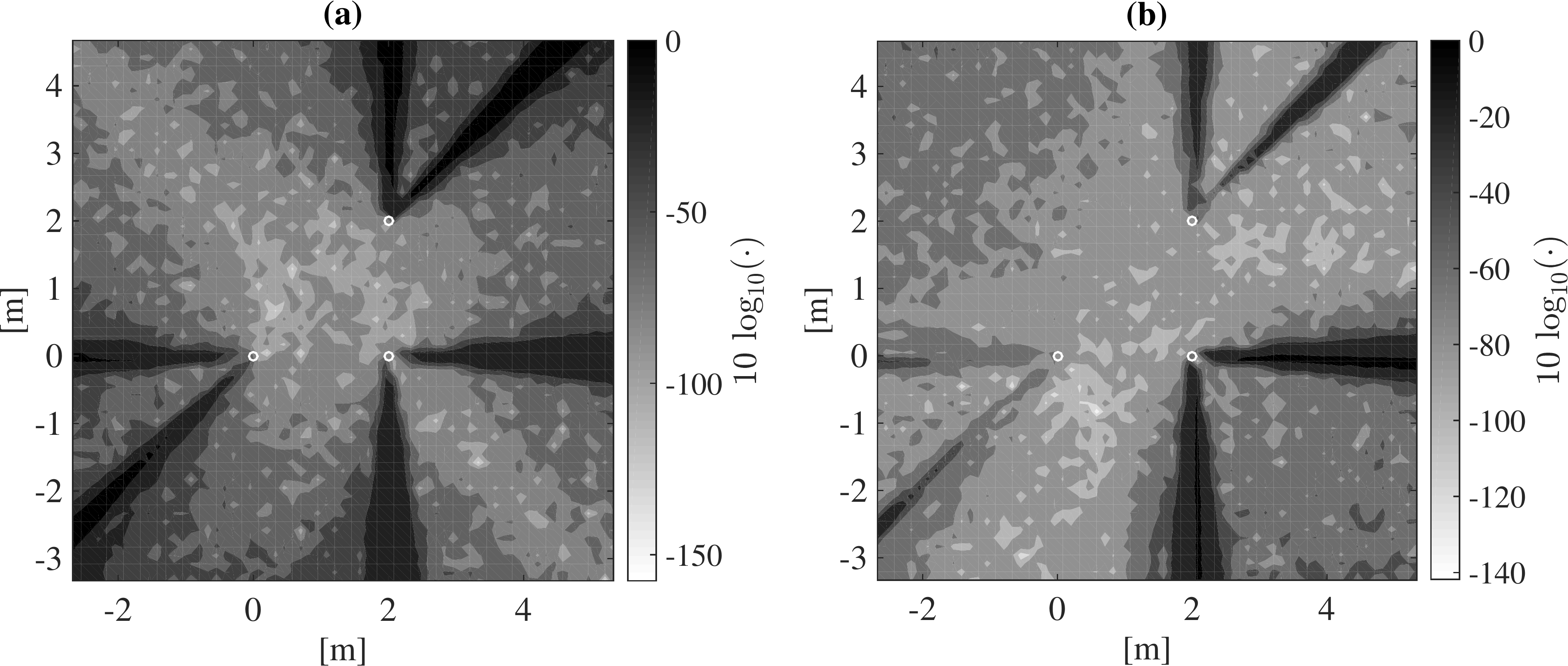}}}
\caption{\label{fig:bias_analysis}First row: asymptotic prediction $-\frac{1}{2}\mb{b}(\x)$ according to eq.~\eqref{eq:ABMLE}. Second row: sample estimate $\mb{\hat B}(\x)$ of the bias from simulations. Projections along the eigenvectors related to the maximum (a) and minimum (b) eigenvalue of $\mb{G}^{-1}(\mathbf{x})$ are shown. Results are expressed on a logarithmic scale.}
\end{center}
\end{figure}
\subsection{MLE in real scenarios}\label{sec:blindMLE}
The analysis we carried out in the previous sections is based on the decomposition of the TDOA--based localization into four different models. However, real world situations are quite different. Unless we have some a-priori information on the source position (e.g. in indoor localization the source lies on a bounded subset of $\RR^2$), it is impossible to know which model to use, or equivalently, on which region of the $x$--plane the source lies. This means that, in general, it is not possible to avoid the ambiguity of the source localization described in Section \ref{sc:TDOAspace}. This is what we call the \emph{blind} localization scenario.


Let us consider the model $M.$ We remind that it is a $(2,2)$ curved exponential family whose parameter space $U$ is a subset of $\Theta=\RR^2.$ Given measurements $\bs{\hat\tau}\in\RR^2$, we want to test the null hypothesis $H_0$ that $\bs{\hat\tau}$ is not an outlier for $M$ versus the alternative $H_1$ that $\bs{\hat\tau}$ is an outlier. If this null hypothesis holds valid, then $\x_+(\bs{\bar\tau}(\bs{\hat\tau};\bs{\Sigma_2}))$ should be seen as an admissible source position, otherwise it should not. The likelihood ratio test statistic is
$$
\mu^*(\bs{\hat\tau};\bs{\Sigma_2})=
\frac{\displaystyle\sup_{\bt\in U}\,
l(\bt;\bs{\hat\tau},\bs{\Sigma_2})}
{\displaystyle\sup_{\bt\in\Theta\setminus U}\,l(\bt;\bs{\hat\tau},\bs{\Sigma_2})}=
\exp\left[-\frac{1}{2}\,
\Vert\bs{\hat\tau}-\bs{\bar\tau}(\bs{\hat\tau};\bs{\Sigma_2}))\Vert_{\bs{\Sigma_2}^{\unaryminus 1}}^2\right].
$$
The logarithm of $\mu^*(\bs{\hat\tau};\bs{\Sigma_2})$ follows a Chi--square distribution with two degrees of freedom:
$$
-2\ln\mu^*(\bs{\hat\tau};\bs{\Sigma_2})=\Vert\bs{\hat\tau}-\bs{\bar\tau}(\bs{\hat\tau};\bs{\Sigma_2})\Vert_{\bs{\Sigma_2}^{\unaryminus 1}}^2\sim\mathcal{X}_1^2\,.
$$
The cutoff between significant and non-significant results depends on the application, but a typical level is $0.05$. This means that the critical region, i.e. the set of $\bs{\hat\tau}$  for which the null hypothesis is rejected, is approximately
$$
\mathcal{R}_{\bs{\Sigma_2}}=
\{\bs{\hat\tau}\in\Theta\setminus U\,|\,\Vert\bs{\hat\tau}-\bs{\bar\tau}(\bs{\hat\tau};\bs{\Sigma_2})\Vert_{\bs{\Sigma_2}^{\unaryminus 1}}^2
\geq 3.84\}.
$$
A similar reasoning can be followed for the models $M_0,M_1,M_2$. In Figure \ref{fig:modelliLRT} we draw the four models and their acceptance regions (the complement in $\Theta$ of the rejected ones).
\begin{figure}[H]
\begin{center}
\resizebox{5cm}{!}{
  \includegraphics[bb=226 296 385 495]{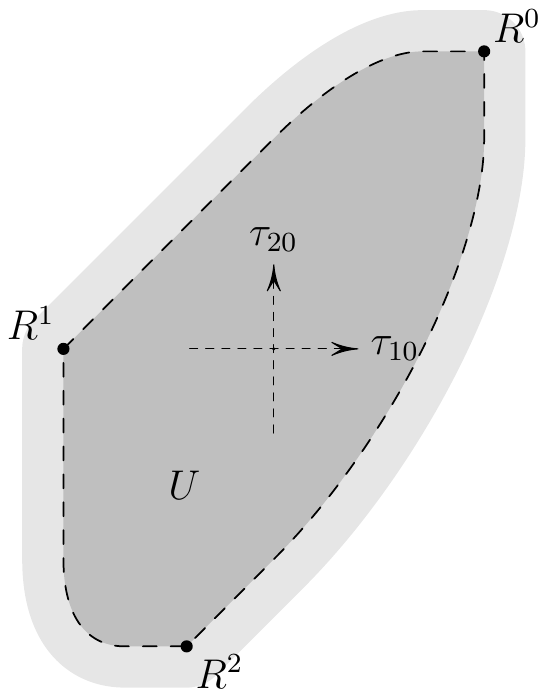}}
  \hspace{20mm}
\resizebox{5cm}{!}{
  \includegraphics[bb=227 296 384 495]{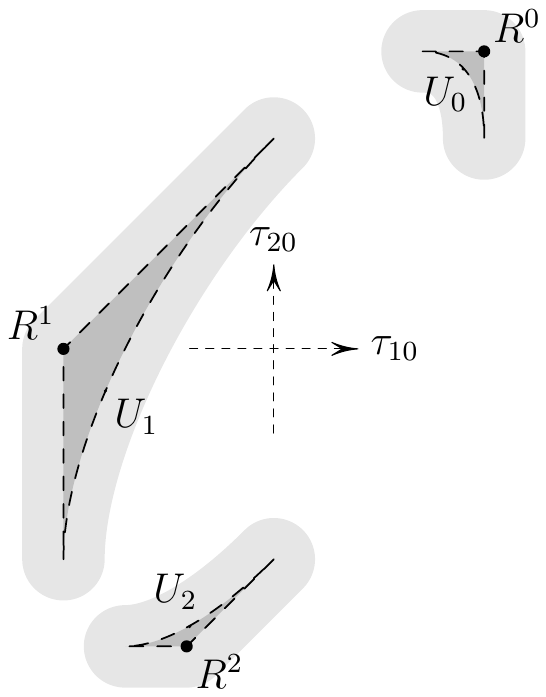}}\\
  \caption{On the left, the medium gray subset is the feasible set $U$ of the model $M$. On the right, the medium gray subset $U_i$ is the feasible set of the model $M_i,\ i=0,1,2.$ The light gray subsets are the regions of acceptance of the measurements for the various models, for the level $0.05$ and $\bs{\Sigma_2}=\sigma^2\bs I,$ where $\sigma=0.1\,d_{10}$. Therefore, two TDOAs defining a point $\bs{\hat{\tau}}=(\hat\tau_{10},\hat\tau_{20})$ outside a certain gray region should be considered outliers with respect to the corresponding model.}
  \label{fig:modelliLRT}
\end{center}
\end{figure}

We summarize the above discussion in the following algorithm.
\begin{algorithm}[H]
\caption{MLE algorithm for blind localization}
\label{alg:MLE_blind}
\begin{algorithmic}[1]
\REQUIRE{TDOA measurements $\bs{\hat\tau}\in\RR^2,$ covariance matrix $\bs{\Sigma_2}$}
\STATE{Compute $\bs{\bar\tau}(\bs{\hat\tau};\bs{\Sigma_2})$ and $\bs{\bar\tau_i}(\bs{\hat\tau};\bs{\Sigma_2}),\ i=0,1,2.$}
\STATE\label{line2}{If $\Vert\bs{\hat\tau}-\bs{\bar\tau}(\bs{\hat\tau};\bs{\Sigma_2})\Vert_{\bs{\Sigma_2}^{\unaryminus 1}}^2<3.84$ accept the estimation $\bs{\bar\tau}(\bs{\hat\tau};\bs{\Sigma_2}),$ else discard the measurements $\bs{\hat\tau}$. The same for $\bs{\bar\tau_i}(\bs{\hat\tau};\bs{\Sigma_2}),\ i=0,1,2.$}
\STATE{The admissible source positions are $\mb{\bar{x}}(\bs{\hat\tau};\bs{\Sigma_2})=\x_+(\bs{\bar\tau}(\bs{\hat\tau};\bs{\Sigma_2}))$ and $\mb{\bar{x}_i}(\bs{\hat\tau};\bs{\Sigma_2})=\x_-(\bs{\bar\tau_i}(\bs{\hat\tau};\bs{\Sigma_2})),\ i=0,1,2,$ among the ones passing the likelihood ratio test in Line \ref{line2}.}
\RETURN $\mb{\bar{x}}(\bs{\hat\tau};\bs{\Sigma_2})$ and $\mb{\bar{x}_i}(\bs{\hat\tau};\bs{\Sigma_2})$
\end{algorithmic}
\end{algorithm}
Finally, for each admissible source position we have the associated error estimation as computed in Section \ref{sc:ASI} and validated in this section.

\section{Conclusion and perspective}\label{sec:conclusion}
In this manuscript we studied the statistical model for the Range Difference--based localization, in the minimal scenario of three sensors and a source lying in the plane of the sensors. This analysis arises from previous works \cite{Compagnoni2013b,Compagnoni2013a}, where the deterministic model for source localization was described in terms of mapping from the physical space into the measurement space (TDOA space).

First of all, we faced the problem of multiple source location by defining four distinct curved exponential families, one for each region of the physical plane where the TDOA map is injective. On this basis, we developed a Maximum Likelihood technique for localization, which works in the TDOA space and is formulated in a closed-form. Then, we obtained the asymptotical inference of the root mean square error and bias on source position using tools of Information Geometry. We showed via simulation the quality and accuracy of such predictions. Moreover, we investigated higher order statistics in order to evaluate the reliability of the asymptotic analysis. These instruments are fundamental for treating relevant problems in applications, such as the optimal placements of sensors. These aspects are currently under further investigation.

We conducted this analysis under the assumption of knowing in advance the statistical model to use, which is not always applicable in a real scenario. In Section \ref{sec:blindMLE} we proposed an algorithm for source localization that applies to such situations, which is based on the likelihood ratio test. Following the same approach, we are currently developing an outlier removal technique \cite{Compagnoni2016c}, which is strongly based on the description of the statistical model given in this manuscript.

\section*{Acknowledgments}
The authors would like to thank Alessandra Guglielmi for useful discussions and suggestions during the preparation of this work.


\appendix

\renewcommand{\thesection}{A}
\section{The Mahalanobis degree discriminant}\label{app:evoluta}
By eliminating $\nu$ from system \eqref{eq:MDequation} we get the following algebraic equations
\begin{equation}\label{eq:MDequation2}
\left\{\begin{array}{l}
\langle\bs{\hat\tau}-\bt,\mathbf{H}\,\nabla_{\bt}a(\bt)\rangle_{\bs\Sigma_2^{\unaryminus 1}}=0\\
a(\bt)=0
\end{array}\right..
\end{equation}
By Definition \ref{def:MDdegre}, the Mahalanobis degree discriminant $\mathcal{E}_{\bs{\Sigma}_2}$ is the locus of the points $\bs{\hat\tau}$ of the $\tau$--plane where \eqref{eq:MDequation2} has at least two coinciding solutions. This condition is equivalent to require that $\nabla_{\bt}a(\bt)$ and $\nabla_{\bt}\langle\bs{\hat\tau}-\bt,\mathbf{H}\,\nabla_{\bt}a(\bt)\rangle_{\bs\Sigma_2^{\unaryminus 1}}$ are parallel.
A Singular code \cite{DGPS} for computing $\mathcal{E}_{\bs{\Sigma}_2}$ is:
\begin{verbatim}
ring r=0,(t1,t2,u,v),lp;
LIB"linalg.lib";
matrix d10[2][1]=d10x, d10y;
matrix d20[2][1]=d20x, d20y;
matrix S[2][2]=s11,s12,s21,s22;
matrix IS=inverse(S);
matrix H[2][2]=0,-1,1,0;
poly a=(d10[1,1]^2+d10[2,1]^2)*t2^2-2*(d10[1,1]*d20[1,1]+d10[2,1]*d20[2,1])*t1*t2
+(d20[1,1]^2+d20[2,1]^2)*t1^2-(d10[1,1]*d20[2,1]-d20[1,1]*d10[2,1])^2;
matrix GA[2][1]=diff(a,t1),diff(a,t2);
matrix NN=-H*IS*H*GA;
matrix MR[2][2]=u-t1, v-t2, NN[1,1], NN[2,1];
poly r=det(MR);
matrix MT[2][2]=diff(R,t1), diff(R,t2), GA[1,1], GA[2,1];
poly t=det(MT);
ideal i=a,r,t;
i=std(i);
print(i[1]);
\end{verbatim}
The Cartesian equation of $\mathcal{E}_{\bs{\Sigma}_2}$ is the degree-$6$ polynomial in $u,v$ that is printed thanks to the last line. The only data to be changed are the numerical values of the matrices \verb|d10| and \verb|d20| equal to the components of the displacement vectors and the covariance matrix \verb|S|.


\bibliographystyle{hplain}   
\bibliography{biblio}       

\end{document}